\title{Practical Algorithms for Linear Boolean-width\footnote{The research of the third author was partially funded by the
Networks programme, funded by the Dutch Ministry of Education, Culture and
Science through the Netherlands Organisation for Scientific Research.}}
\author[1]{Chiel B. ten Brinke}
\author[1]{Frank J. P. van Houten}
\author[1]{Hans L. Bodlaender}
\affil[1]{Department of Computer Science, Utrecht University\\
    PO Box 80.089, 3508 TB Utrecht, The Netherlands\\
\texttt{CtenBrinke@gmail.com, Frankv@nhouten.com, H.L.Bodlaender@uu.nl}}
\authorrunning{Ch.\,B. ten Brinke, F.\,J.\,P. van Houten and H.\,L. Bodlaender} 
\subjclass{G.2.2 [Discrete Mathematics]: Graph Theory --- Graph algorithms; F.2.2 [Analysis of Algorithms and Problem Complexity]: Nonnumerical Algorithms and Problems --- Computations on discrete structures}
\keywords{graph decomposition, boolean-width, heuristics, exact algorithms, vertex subset problems}
\DeclareMathOperator*{\argmin}{argmin}
\NewDocumentCommand\set{mg}{%
    \ensuremath{\left\lbrace #1 \IfNoValueTF{#2}{}{\, \middle|\, #2} \right\rbrace}%
}
\NewDocumentCommand\powerset{mg}{%
    \ensuremath{\mathcal{P}(#1)}
}
\newtheorem*{claim}{\textbf{Claim}}
\newcommand{\twopartdef}[4]{%
    \left\{
        \begin{array}{ll}
            #1 & \mbox{if } #2 \\
            #3 & \mbox{if } #4
        \end{array}
    \right.
}
\newcommand{\un}{\mathcal{UN}}
\DeclareMathOperator{\booldim}{bool-dim}
\DeclareMathOperator{\bw}{boolw}
\DeclareMathOperator{\pw}{pw}
\DeclareMathOperator{\lbw}{lboolw}
\newcommand{\cut}[1]{(#1, \overline{#1})}
\newcommand{\nec}[1]{nec(\equiv_{#1}^d)}
\newif\ifarchive{}
\begin{document}
\maketitle

\begin{abstract}
In this paper, we give a number of new exact algorithms and heuristics to compute linear boolean decompositions, and experimentally evaluate these algorithms.
The experimental evaluation shows that significant improvements can be made with respect to running time
without increasing the width of the generated decompositions.
We also evaluated dynamic programming algorithms on linear boolean decompositions for several vertex subset problems.
This evaluation shows that such algorithms are often much faster (up to several orders of magnitude) compared to theoretical worst case bounds.
\end{abstract}

\section{Introduction}
\label{section:introduction}
Boolean-width is a recently introduced graph parameter~\cite{boolean-width}.
Similarly to treewidth and other parameters, it measures some structural complexity of a graph.
Many NP-hard problems on graphs become easy if some graph parameter is small.
We need a derived structure which captures the necessary information of a graph in order to exploit such a small parameter.
In the case of boolean-width, this is a binary partition tree, referred to as the decomposition tree.
However, computing an optimal decomposition tree is usually a hard problem in itself.
A common approach to bypass this problem is to use heuristics to compute decompositions with a low boolean-width.

Algorithms for computing boolean decompositions have been studied before in
~\cite{vatshelle, UpperBoundsBoolw, Practicalaspects, Hvidevold},
but in this paper we study the specific case of linear boolean decompositions, which are considered in~\cite{Belmonte201354, UpperBoundsBoolw, Practicalaspects}.
Linear decompositions are easier to compute and the theoretical running time of algorithms for solving practical problems is lower on linear decompositions than on tree shaped ones.
For instance, vertex subset problems can be solved in $O^*( {nec}^3)$ due to a dynamic programming algorithm by Bui-Xuan et al.~\cite{fastdynamicprogramming},
but this can be improved to $O^*({nec}^2)$ for linear decompositions.
Here, $nec$ is the number of d-neighborhood equivalence classes, i.e., the maximum size of the dynamic programming table.

We first give an exact algorithm for computing optimal linear boolean decompositions, improving upon existing algorithms,
and subsequently investigate several new heuristics through experiments, improving upon the work by Sharmin~\cite[Chapter 8]{Practicalaspects}.
We then study the practical relevance of these algorithms in a set of experiments by solving an instance of a vertex subset problem,
investigating the number of equivalence classes compared to the theoretical worst case bounds.

\section{Preliminaries}
\label{section:preliminaries}

A \emph{graph} $G = (V, E)$ of size $n$ is a pair consisting of a set of $n$ \emph{vertices} $V$ and a set of \emph{edges} $E$. 
The \emph{neighborhood} of a vertex $v \in V$ is denoted by $N(v)$.
For a subset $A \subseteq V$ we denote the neighborhood by $N(A) = \bigcup_{v \in A} N(v)$.
In this paper we only consider simple, undirected graphs and assume we are given a total ordering on the vertices of a graph $G$. 
For a subset $A \subseteq V$ we denote the \emph{complement} by $\overline{A} = V \setminus A$.
A partition $\cut{A}$ of $V$ is called a \emph{cut} of the graph.
Each cut $\cut{A}$ of $G$ induces a bipartite subgraph $G[A, \overline{A}]$.

The \emph{neighborhood across a cut} $\cut{A}$ for a subset $X \subseteq A$ is defined as $N(X) \cap \overline{A}$.

\begin{definition}[Unions of neighborhoods]
Let $G = (V, E)$ be a graph and $A \subseteq V$. We define the set of \emph{unions of neighborhoods across a cut $(A, \overline{A})$} as
\[
    \un(A) = \set{ N(X) \cap \overline{A} }{X \subseteq A}.
\]
\end{definition} 

The number of unions of neighborhoods is symmetric for a cut $\cut{A}$, i.e., $|\un(A)| = |\un(\overline{A})|$~\cite[Theorem 1.2.3]{booleanMatrixTheory}.
Furthermore, for any cut $\cut{A}$ of a graph $G$ it holds that $|\un(A)| = \#\mathcal{MIS}(G[A, \overline{A}])$, where $\#\mathcal{MIS}(G)$ is the number of maximal independent sets in $G$~\cite[Theorem 3.5.5]{vatshelle}.

\begin{definition}[Decomposition tree] 
A \emph{decomposition tree} of a graph $G = (V, E)$ is a pair $(T, \delta)$, where $T$ is a full binary tree and $\delta$ is a bijection between the leaves of $T$ and vertices of $V$.
If $a$ is a node and $L$ are its leaves, we write $\delta(a) = \bigcup_{l \in L} \delta(l)$.
So, for the root node $r$ of $T$ it holds that $\delta(r) = V$.
Furthermore, if nodes $a$ and $b$ are children of a node $w$, then $(\delta(a), \delta(b))$ is a partition of $\delta(w)$.
\end{definition} 

In this paper we consider a special type of decompositions, namely \emph{linear decompositions}.

\begin{definition}[Linear decomposition] 
A \emph{linear decomposition}, or \emph{caterpillar decomposition}, is a decomposition tree $(T, \delta)$
where $T$ is a full binary tree and for which each internal node of $T$ has at least one leaf as a child.
We can define such a linear decomposition through a linear ordering $\pi = \pi_1,\dots,\pi_n$ of the vertices of $G$ by letting $\delta$ map the $i$-th leaf of $T$ to $\pi_i$.
\end{definition}

\begin{definition}[Boolean-width]
Let $G = (V, E)$ be a graph and $A \subseteq V$. 
The \emph{boolean dimension of $A$} is a function $\booldim : 2^{V} \to \mathbb{R}$.
    \[
        \booldim(A) =\log_2 |\un(A)|.
    \]

Let $(T, \delta)$ be a decomposition of a graph $G$. We define the \emph{boolean-width} of $(T, \delta)$ as the maximum boolean dimension over all cuts induced by nodes of $(T, \delta)$.
    \[
        \bw(T, \delta) = \max\limits_{w \in T} \booldim(\delta(w))
    \]

The boolean-width of $G$ is defined as the minimum boolean-width over all possible full decompositions of $G$,
while the \emph{linear boolean-width} of a graph $G = (V(G), E(G))$ of size $n$ is defined as the the minimum boolean-width over all linear decompositions of $G$.
    \[
        \bw(G) = \min\limits_{\ (T, \delta)\ of\ G} \bw(T, \delta) 
    \]
    \[
        \lbw(G) = \min\limits_{linear\ (T, \delta)\ of\ G} \bw(T, \delta) 
    \]
\end{definition}

It is known that for any graph $G$ it holds that $\bw(G) \leq treewidth(G) + 1$~\cite[Theorem 4.2.8]{vatshelle}.
The linear variant of treewidth is called  \emph{pathwidth}~\cite{Pathwidth}, or $\pw$ for short.

\begin{theorem}
[Appendix~\ref{prf:pathwidth}]
\label{th:pathwidth}
For any graph $G$ it holds that $\lbw(G) \leq \pw(G) + 1$.
\end{theorem}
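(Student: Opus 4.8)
The plan is to start from an optimal path decomposition of $G$ and read off a linear ordering of the vertices, then bound the boolean dimension of every prefix cut by the size of a single bag. Recall that a path decomposition is a sequence of bags $X_1, \dots, X_r \subseteq V$ covering all vertices and edges, with $\max_i |X_i| = \pw(G) + 1$, in which the bags containing any fixed vertex form a contiguous interval. I would order the vertices by the index of the first bag in which they occur, breaking ties arbitrarily, obtaining $\pi = \pi_1, \dots, \pi_n$ and the associated linear decomposition $(T, \delta)$. The cuts induced by the nodes of $(T,\delta)$ are exactly the prefix cuts $\cut{A_i}$ with $A_i = \{\pi_1, \dots, \pi_i\}$, so it suffices to bound $\booldim(A_i)$ for every $i$.

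The first key step is the general observation that $\booldim(A) \le |\partial A|$, where $\partial A$ denotes the set of vertices of $A$ having at least one neighbor in $\overline{A}$. Indeed, for any $X \subseteq A$ the set $N(X) \cap \overline{A}$ is unchanged if we discard from $X$ every vertex with no neighbor across the cut, so $N(X) \cap \overline{A} = N(X \cap \partial A) \cap \overline{A}$. Hence the map $X \mapsto N(X) \cap \overline{A}$ factors through $X \mapsto X \cap \partial A$, giving $|\un(A)| \le 2^{|\partial A|}$ and therefore $\booldim(A) = \log_2 |\un(A)| \le |\partial A|$.

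The second step is to show that the boundary $\partial A_i$ of every prefix is contained in a single bag. Let $t$ be the index of the first bag containing $\pi_i$, so $\pi_i \in X_t$. For any boundary vertex $u \in \partial A_i$ pick an edge $uv$ with $v \in \overline{A_i}$; this edge lies in some bag $X_m$. Because $v$ comes after $\pi_i$ in the ordering its first bag index is at least $t$, so $m \ge t$; because $u$ comes at or before $\pi_i$ its first bag index is at most $t$. Since $u$ lies in $X_m$ with $m \ge t$ while its occurrence interval begins at index at most $t$, the contiguity property forces $u \in X_t$. Thus $\partial A_i \subseteq X_t$, and combining with the previous step gives $\booldim(A_i) \le |\partial A_i| \le |X_t| \le \pw(G) + 1$. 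Taking the maximum over all $i$ yields $\lbw(G) \le \bw(T, \delta) \le \pw(G) + 1$.

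I expect the main obstacle to be the interval argument in the third paragraph, in particular handling ties cleanly (vertices whose first bag is also $X_t$) so that the inequalities $m \ge t$ and $f(u) \le t$ are justified, and confirming that the contiguity of each vertex's occurrence interval really does place every boundary vertex in $X_t$. Everything else is a routine consequence of the counting bound $|\un(A)| \le 2^{|\partial A|}$.
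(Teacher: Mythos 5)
Your proof is correct and follows essentially the same approach as the paper: order the vertices bag by bag, bound the boolean dimension of each prefix cut by the size of its boundary, and show that boundary is contained in a single bag via the contiguity property. Your version is in fact slightly cleaner, since the uniform claim $\partial A_i \subseteq X_t$ for every prefix avoids the paper's separate factor-of-two accounting for the cuts that arise partway through a bag.
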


The algorithms in this paper make extensive use of sets and set operations, 
which can be implemented efficiently by using bitsets.
By using a mapping from vertices to bitsets that represent the neighborhood of a vertex we can store the adjacency matrix of a graph efficiently.
We assume that bitset operations take $O(n)$ time and need $O(n)$ space, even though in practice this may come closer to $O(1)$.
If one assumes that these requirements are constant, several time and space bounds in this paper improve by a factor $n$.

In this paper we assume that the graph $G$ is connected, since if the graph consists of multiple connected components
we can simply compute a linear decomposition for each connected component, after which we glue them together, in any arbitrary order.

\section{Exact Algorithms}
\label{section:exactalgorithms}
We can characterize the problem of finding an optimal linear decomposition by the following recurrence relation, in which $P$ is a function mapping a subset of vertices $A$ to the linear boolean-width of the induced subgraph $G[A, \overline{A}]$.

\begin{equation}
\label{eq:lboolw}
    \begin{aligned}
        P(\{v\}) &= |\un(\{v\})| = \twopartdef{1}{N(v) = \emptyset}{2}{N(v) \neq \emptyset}\\
        P(A)     &= \min_{v \in A} \{\max\{|\un(A)|, P(A \setminus \{v\})\}\}
    \end{aligned}
\end{equation}
The boolean-width of the graph $G$ is now given by $\log_2(P(V))$.
Adaptation of existing techniques lead to the following algorithms for linear boolean-width, upon we hereafter improve:
\begin{itemize}
    \item With dynamic programming a running time of $O(2.7284^n)$ is achieved. (See Theorem~\ref{thm:lbw_dp}, Appendix~\ref{subs:existing_exact_algorithms})
    \item With adaptation of the exact algorithm for boolean-width by Vatshelle~\cite{vatshelle}, a running time of $O(n^3 \cdot 2^{n + \lbw(G)})$ is achieved.
    (See Theorem~\ref{thm:lbw_vatshelle}, Appendix~\ref{subs:existing_exact_algorithms})
\end{itemize}

\subsection{Improving the running time}
We present a faster and easier way to precompute for all cuts $A \subseteq V$ the value $|\un(A)|$,
which results in a new algorithm displayed in Algorithm~\ref{alg:incremental-un-exact}.
In the following it is important that the $\un$ sets are implemented as hashmaps, which will only save distinct neighborhoods.

\begin{algorithm}[H]
\caption{Compute $\un(X \cup \set{v})$ given $\un(X)$.}
\label{increment-un}
\begin{algorithmic}[1]
\Procedure{Increment-UN}{$G,X,\un_X,v$}
    \State{$\mathcal{U} \leftarrow \emptyset$}
    \For{$S \in \un_X$}
        \State{$\mathcal{U} \leftarrow \mathcal{U} \cup \set{S \setminus \set{v}}$}
        \State{$\mathcal{U} \leftarrow \mathcal{U} \cup \set{(S \setminus \set{v}) \cup (N(v) \cap (\overline{X} \setminus
        \set{v}))}$}
    \EndFor{}
    \State{\textbf{return} $\mathcal{U}$}
\EndProcedure{}
\end{algorithmic}
\end{algorithm}

\begin{lemma}
[Appendix~\ref{prf:increment-un-correctness}]
\label{increment-un-correctness}
    The procedure Increment-UN is correct and runs in $O(n \cdot |\un_X|)$ time using $O(n \cdot |\un_X|)$ space.
\end{lemma}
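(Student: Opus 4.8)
The plan is to prove correctness by a direct double-inclusion argument on sets and then to bound the running time by counting the bitset operations performed in the loop. First I would fix notation: write $X' = X \cup \{v\}$, and record that since $v \notin X$ we have $\overline{X'} = \overline{X} \setminus \{v\}$, so that by definition $\un(X') = \{\,N(Y) \cap \overline{X'} : Y \subseteq X'\,\}$. The key step is to split each $Y \subseteq X'$ according to whether $v \in Y$, writing $Y' = Y \setminus \{v\} \subseteq X$. When $v \notin Y$ we have $N(Y) \cap \overline{X'} = (N(Y') \cap \overline{X}) \setminus \{v\}$, and when $v \in Y$ we have $N(Y) = N(Y') \cup N(v)$, hence $N(Y) \cap \overline{X'} = ((N(Y') \cap \overline{X}) \setminus \{v\}) \cup (N(v) \cap (\overline{X} \setminus \{v\}))$. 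In both identities the quantity $N(Y') \cap \overline{X}$ is exactly an element $S \in \un(X)$, and as $Y$ ranges over all subsets of $X'$ its intersection $Y' = Y \cap X$ ranges over all subsets of $X$, so $S$ ranges over all of $\un_X$.

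These two identities are precisely the two values inserted into $\mathcal{U}$ on lines 4 and 5 of the procedure. For the forward inclusion, every element of $\un(X')$ arises from some $Y \subseteq X'$ and, depending on whether $v \in Y$, equals one of the two values computed for the corresponding $S = N(Y \cap X) \cap \overline{X} \in \un_X$, so it is added to $\mathcal{U}$. For the reverse inclusion, each value added on line 4 or line 5 is, by the same identities read in the other direction, of the form $N(Y) \cap \overline{X'}$ for an explicit $Y \subseteq X'$ (namely $Y = Y'$ or $Y = Y' \cup \{v\}$ with $N(Y') \cap \overline{X} = S$), hence lies in $\un(X')$. The main thing to get right here is the bookkeeping of the intersections: one must verify that working with the truncated set $S = N(Y') \cap \overline{X}$ rather than the full neighborhood $N(Y')$ loses no information, which holds because $\overline{X'} \subseteq \overline{X}$ and because the only potentially new vertices, namely $N(v) \cap (\overline{X} \setminus \{v\})$, are computed directly from $v$.

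For the running time, the loop executes $|\un_X|$ iterations, and each iteration performs a constant number of bitset operations (set difference, intersection, union), each costing $O(n)$, together with at most two insertions into the hashmap $\mathcal{U}$; since elements of $\mathcal{U}$ are stored as distinct neighborhoods, an insertion also costs $O(n)$ for hashing and comparison. This gives the claimed $O(n \cdot |\un_X|)$ time. For the space bound, each input element contributes at most two output elements, so $|\mathcal{U}| \le 2|\un_X|$, and each stored set occupies $O(n)$ space, yielding $O(n \cdot |\un_X|)$ space.
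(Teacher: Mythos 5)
Your proof is correct and follows essentially the same route as the paper's: both establish the set equality by a case split on whether the chosen vertex $v$ belongs to the subset generating a neighborhood (your $Y$ versus the paper's $R$), matching the two cases to lines 4 and 5 of the procedure, and both bound time and space by $O(n)$ work per element of $\un_X$. Your version merely writes out the two identities more explicitly, while the paper phrases one inclusion as a proof by contradiction; the content is the same.
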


\begin{algorithm}[H]
\caption{Return $\lbw(G)$, if it is smaller than $\log K$, otherwise return $\infty$.}
\label{alg:incremental-un-exact}
\begin{algorithmic}[1]
\Procedure{Incremental-UN-exact}{$G,K$}
    \State{$T_{\un}(\emptyset) \leftarrow 0$}
    \State{\Call{Compute-count-UN}{$G,K,T_{\un},\emptyset,\set{\emptyset}$}}
    \State{}
    \State{$P(X) \leftarrow \infty$, for all $X \subseteq V$}
    \State{$P(\emptyset) \leftarrow 0$}
    \State{}
    \For{$i \leftarrow 0, \dots, |V|-1$}
        \For{$X \subseteq V$ of size $i$}
            \For{$v \in V \setminus X$}
                \State{$Y \leftarrow X \cup \set{v}$}
                \If{$P(X) \leq K$}
                    \State{$P(Y) \leftarrow \min(P(Y), \max(T_{\un}(Y), P(X)))$}
                \EndIf{}
            \EndFor{}
        \EndFor{}
    \EndFor{}
    \State{}
    \State{\textbf{return} $\log_2(P(V))$}
\EndProcedure{}

\State{}

\Procedure{Compute-count-UN}{$G,K,T_{\un},X,\un_X$}
    \For{$v \in V \setminus X$}
        \State{$Y \leftarrow X \cup \set{v}$}
        \If{$T_{\un}(Y)$ is not defined}
\label{booldim-pruning1}
            \State{$\un_{Y} \leftarrow$ \Call{Increment-UN}{$G,X,\un_X,v$}}
            \State{$T_{\un}(Y) \leftarrow |\un_{Y}|$}
            \If{$T_{\un}(Y) \leq K$}
\label{booldim-pruning2}
                \State{\Call{Compute-count-UN}{$G,K,T_{\un},Y,\un_{Y}$}}
            \EndIf{}
        \EndIf{}
    \EndFor{}
\EndProcedure{}
\end{algorithmic}
\end{algorithm}

\begin{theorem}
[Appendix~\ref{prf:incremental-un-exact-correctness}]
\label{incremental-un-exact-correctness}
    Given a graph $G$, Algorithm~\ref{alg:incremental-un-exact} can be used to compute $\lbw(G)$
    in $O(n \cdot 2^{n+\lbw(G)})$ time using $O(n \cdot 2^n)$ space.
\end{theorem}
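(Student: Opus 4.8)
The plan is to prove three things in turn: that \textsc{Compute-count-UN} records $T_{\un}(Y) = |\un(Y)|$ for every subset the main loop will ever consult, that the main loop then correctly evaluates the recurrence~\eqref{eq:lboolw}, and finally that the stated time and space bounds hold.

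First I would establish correctness of \textsc{Compute-count-UN} by induction on $|X|$ along the recursion. The base case is $\un_\emptyset = \set{\emptyset} = \un(\emptyset)$. For the inductive step, whenever the procedure recurses into a set $X$ it passes $\un_X$; assuming $\un_X = \un(X)$, Lemma~\ref{increment-un-correctness} guarantees \textsc{Increment-UN} returns $\un_Y = \un(Y)$ for $Y = X \cup \set{v}$, so $T_{\un}(Y) = |\un(Y)|$ is recorded correctly (the guard on line~\ref{booldim-pruning1} only prevents recomputation, not correctness). The one structural fact I need is to characterise \emph{which} subsets are reached: $X$ is expanded precisely when $|\un(X)| = T_{\un}(X) \le K$ (line~\ref{booldim-pruning2}). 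I would then observe that $P(X) \le K$ forces $X$ to be expanded: by definition of~\eqref{eq:lboolw}, $P(X) \le K$ means there is an ordering of $X$ all of whose prefixes $X'$ satisfy $|\un(X')| \le P(X) \le K$, and following this ordering from $\emptyset$ the procedure reaches and expands each prefix in turn, hence $X$ itself. Consequently $T_{\un}(X \cup \set{v})$ is defined and correct for every such $X$ and every $v \notin X$.

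Next I would prove the main loop computes the recurrence by induction on subset size, establishing simultaneously that (a) the stored value $\hat P(A) \ge P(A)$ for all $A$, and (b) $\hat P(A) = P(A)$ whenever $P(A) \le K$. The base case $\hat P(\emptyset) = 0 = P(\emptyset)$ is immediate. For (a), any update of $\hat P(A)$ is triggered from a parent $X = A \setminus \set{v}$ with $\hat P(X) \le K$; by the hypothesis $P(X) \le \hat P(X) \le K$, so the previous paragraph makes $T_{\un}(A) = |\un(A)|$ available, and the update value $\max(|\un(A)|,\hat P(X)) \ge \max(|\un(A)|,P(X)) \ge P(A)$. For (b), choosing the vertex $v^*$ attaining the minimum in~\eqref{eq:lboolw} gives $P(A \setminus \set{v^*}) \le P(A) \le K$, so by induction $\hat P(A \setminus \set{v^*}) = P(A \setminus \set{v^*}) \le K$, the guard fires, and the resulting update forces $\hat P(A) \le P(A)$. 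Since the loop processes subsets in nondecreasing size, every parent value is final before use. Running with $K \ge P(V) = 2^{\lbw(G)}$ therefore yields $\hat P(V) = P(V)$, so the algorithm returns $\log_2 P(V) = \lbw(G)$; as $\lbw(G)$ is unknown in advance, I would invoke the algorithm with geometrically increasing $K$ and take the first finite answer.

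Finally, for the resource bounds I would charge the running time to \textsc{Compute-count-UN}, which dominates. Each subset $Y$ has $\un_Y$ built at most once, from an expanded parent $X$ with $|\un_X| \le K$, at cost $O(n\,|\un_X|) = O(nK)$ by Lemma~\ref{increment-un-correctness}; summed over the at most $2^n$ subsets this is $O(nK\cdot 2^n)$, while the main loop contributes only $O(n\cdot 2^n)$ transitions. Taking $K = O(2^{\lbw(G)})$ (justified by the doubling schedule, whose total cost is a geometric sum dominated by the last run) gives $O(n\cdot 2^{n+\lbw(G)})$ time. For space, the tables $P$ and $T_{\un}$ hold $2^n$ entries of $O(n)$ bits, i.e.\ $O(n\cdot 2^n)$; the $\un$ sets held along a recursion path of depth at most $n$ add $O(n^2 K)$, which is $o(n\cdot 2^n)$ because $\booldim(A) \le \min(|A|,|\overline{A}|) \le n/2$ forces $K = O(2^{n/2})$. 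The main obstacle is the correctness of the pruning: showing that discarding every branch with $|\un(X)| > K$ never removes a set needed by an optimal-within-threshold ordering, which is exactly the implication $P(X) \le K \Rightarrow X$ is expanded established above.
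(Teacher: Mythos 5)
Your proposal is correct and follows essentially the same route as the paper: the central point in both is that $P(X)\le K$ guarantees an ordering of $X$ all of whose prefixes have at most $K$ unions of neighborhoods, so the pruned recursion in \textsc{Compute-count-UN} reaches and expands every subset the dynamic program needs, after which the time/space accounting and the doubling of $K$ give the stated bounds. Your treatment of the main loop via the two-sided invariant ($\hat P \ge P$ always, with equality below the threshold) is somewhat more explicit than the paper's one-line appeal to bottom-up evaluation of the recurrence, but it is the same argument.
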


This new algorithm improves upon the time in Theorem~\ref{thm:lbw_vatshelle} by a factor $n^2$, while the space requirements stay the same.
Since the tightest known upperbound for linear boolean-width is $\frac{n}{2} - \frac{n}{143} + O(1)$~\cite{UpperBoundsBoolw}, this
algorithm can be slower than dynamic programming, since $O(2^{n + \frac{n}{2} - \frac{n}{143} + O(1)}) = O(2.8148^{n + O(1)}) \supsetneq O(2.7284^n)$,
but this is very unlikely to happen in practice.

\section{Heuristics}
\label{section:heuristics}
\subsection{Generic form of the heuristics}
The goal when using a heuristic is to find a linear ordering of the vertices in a graph in such a way that the 
decomposition that corresponds to this ordering will be of low boolean-width.
A basic strategy to accomplish this is to start the ordering with some vertex and then by some 
selection criteria append a new vertex to the ordering that has not been appended yet.
\ifarchive{}
This strategy is used in heuristics introduced by Sharmin~\cite[Chapter 8]{Practicalaspects}, 
and a similar approach is shown in Algorithm~\ref{alg:generatevertexordering}.
\begin{algorithm}[h!]
\caption{Greedily generate an ordering based on the score function and the given starting vertex.}
\label{alg:generatevertexordering}
\begin{algorithmic}[1]
\Procedure{GenerateVertexOrdering}{$G, ScoreFunction, init$}
    \State{$Decomposition \gets (init)$}
    \State{$Left \gets \{init\}$}
    \State{$Right \gets V \setminus \{init\}$}
    \While{{$Right \neq \emptyset$}}
        \State{$Candidates \gets$ set returned by candidate set strategy}
        \If{there exists $v \in Candidates$ belonging to a trivial case}
            \State{$chosen \gets v$}
        \Else{}
            \State{$chosen \gets \argmin \limits_{v \in Candidates}(ScoreFunction(G, Left, Right, v))$}
        \EndIf{}

        \State{$Decomposition \gets Decomposition \cdot \{chosen\}$}
        \State{$Left \gets Left \cup \{chosen\}$}
        \State{$Right \gets Right \setminus \{chosen\}$}
    \EndWhile{}
\EndProcedure{}
\State{\textbf{return} $Decomposition$}
\end{algorithmic}
\end{algorithm}

\else
This strategy is used in heuristics introduced by Sharmin~\cite[Chapter 8]{Practicalaspects}.
\fi

At any point in the algorithm we denote the set of all vertices contained in the ordering by $Left$, and the remaining vertices by $Right$.
While $Right$ is not empty, we choose a vertex from a candidate set $Candidates \subseteq Right$,
 based on a set of trivial cases, and, if no trivial case applies, by making a local greedy choice using a score function that indicates the quality
of the current state $Left, Right$.

\ifarchive{}
\subsubsection{Selecting the initial vertex}
Selecting a good initial vertex can be of great influence on the quality of the decomposition.
Sharmin proposes to use a double breadth first search (BFS) in order to select the initial vertex. 
This is done by initiating a BFS, starting at an arbitrary vertex, after which a vertex of the last level of the BFS is selected.
This process is then repeated by using the found vertex as a starting point for the second BFS.
However, the fact that an arbitrary vertex is used for the first BFS already influences the boolean-width of the computed decomposition.
During our experiments we noticed that performing a single BFS sometimes gave better results.
But since we will see in Chapter~\ref{section:sigmarhoproblems} that applications are a lot more expensive in terms of running time, it is wise
to use all possible starting vertices when trying to find a good decomposition.
\else
The choice of the initial vertex can be of great influence on the quality of the decomposition.
Sharmin proposes to use a double breadth first search (BFS) in order to select this vertex,
but since we will see in Chapter~\ref{section:sigmarhoproblems} that applications are a lot more expensive in terms of running time, it is wise
to use all possible starting vertices when trying to find a good decomposition.
\fi

\subsubsection{Pruning}
Starting from multiple initial vertices allows us to do some pruning.
If we notice during the algorithm that the score of the decomposition that is being constructed exceeds the score of the best decomposition found so far,
we can stop immediately and move to the next initial vertex.
For this reason, it is wise to start with the most promising initial vertices (e.g.\ obtained by the double BFS method), and after that try all other initial vertices.

\subsubsection{Candidates}
The most straightforward choice for the set $Candidates$ is to take $Right$ entirely.
However, we may do unnecessary work here, since vertices that are more than 2 steps away from any vertex in $Left$ cannot decrease the size of $\un$.
This means that they should never be chosen by a greedy score function, which means that we can skip them right away.
By this reasoning, the set of $Candidates$ can be reduced to $N^2(Left) \cap Right = N(Left \cup N(Left)) \cap Right$.
Especially for larger sparse graphs, this can significantly decrease the running time.

\subsubsection{Trivial cases}
A vertex is chosen to be the next vertex in the ordering if it can be guaranteed that it is an optimal choice by means of a trivial case.
Lemma~\ref{lemma:trivialcases} generalizes results by Sharmin~\cite{Practicalaspects}, since the two trivial cases given by her are subcases of our lemma, 
namely $X = \emptyset$ and $X = \{u\}$ for all $u \in Left$.
Note that we can add a wide range of trivial cases by varying $X$, such as $X = Left$ and $\forall u, w \in Left: X = \{u, w\}$,
but this will increase the complexity of the algorithm. 

\begin{lemma}
[Appendix~\ref{prf:trivialcases}]
\label{lemma:trivialcases}
Let $X \subseteq Left$. If $\exists v \in Right$ such that $N(v) \cap Right = N(X) \cap Right$, 
then choosing $v$ will not change the boolean-width of the resulting decomposition.
\end{lemma}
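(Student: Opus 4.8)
The plan is to reduce the statement to a single monotonicity fact about the size of $\un$, and then lift it to the whole decomposition by an exchange argument. Write $A = Left$ and $\overline{A} = Right$ (they partition $V$ throughout the greedy algorithm), and let $A' = A \cup \set{v}$, so the cut created by choosing $v$ is $(A', \overline{A'})$ with $\overline{A'} = \overline{A} \setminus \set{v}$. The key local claim I would establish first is the inequality $|\un(A')| \le |\un(A)|$, i.e.\ that appending $v$ never increases the boolean dimension of the current cut. Given this, the fact that $\booldim(A') \le \booldim(A)$ means the new prefix dimension is already dominated by the (already fixed) dimension of $A$, so it cannot raise the running maximum.

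To prove the local claim I would exhibit a surjection $\psi : \un(A) \to \un(A')$ defined by $S \mapsto S \setminus \set{v}$. Well-definedness is immediate: if $S = N(Z) \cap \overline{A}$ with $Z \subseteq A$, then $S \setminus \set{v} = N(Z) \cap \overline{A'} \in \un(A')$. Surjectivity is exactly where the hypothesis $N(v) \cap \overline{A} = N(X) \cap \overline{A}$ is used. Given $T = N(Z') \cap \overline{A'}$ with $Z' \subseteq A'$, if $v \notin Z'$ then $T$ is plainly the image of $N(Z') \cap \overline{A}$; if $v \in Z'$, write $Z' = Z'' \cup \set{v}$ with $Z'' \subseteq A$ and substitute $X$ for $v$, setting $W = Z'' \cup X \subseteq A$. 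Using that the graph has no self-loops ($v \notin N(v)$) one checks the identities $N(v)\cap\overline{A'} = N(v)\cap\overline{A}$ and $N(X)\cap\overline{A'} = N(v)\cap\overline{A'}$, whence $N(W)\cap\overline{A'} = T$ and therefore $T = \psi(N(W)\cap\overline{A})$. This surjection gives $|\un(A')| \le |\un(A)|$.

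Next I would convert the local fact into the global statement by an exchange argument on an optimal ordering of $Right$. Fix any completion of the ordering that attains the optimum over all extensions of the current prefix, and suppose it places $v$ at position $k$ among the $Right$-vertices, after $r_1,\dots,r_{k-1}$. Moving $v$ to the front changes only the prefixes $L_j = Left \cup \set{r_1,\dots,r_j}$ for $0 \le j \le k-1$, each of which is replaced by $L_j \cup \set{v}$; all prefix sets from position $k$ onward are unchanged, since $L_{k-1}\cup\set{v}$ is exactly the original $k$-th prefix. The crucial observation is that the hypothesis propagates downward: because $\overline{L_j} \subseteq \overline{A}$ and $X \subseteq Left \subseteq L_j$, intersecting $N(v)\cap\overline{A} = N(X)\cap\overline{A}$ with $\overline{L_j}$ shows the same neighborhood-matching condition holds at every $L_j$, and $v \in \overline{L_j}$. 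Applying the local claim to each cut $(L_j, \overline{L_j})$ yields $\booldim(L_j \cup \set{v}) \le \booldim(L_j)$, so every new prefix dimension is bounded by an original one; hence the reordered completion has boolean-width no larger than the optimum, which proves that taking $v$ first is safe.

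The main obstacle I anticipate is the bookkeeping in the surjectivity step, specifically verifying that the contribution of $v$ across the smaller cut $\overline{A'}$ is faithfully reproduced by $X$ via the identities above, where the no-self-loop assumption is essential. Once the local inequality and its propagation to the intermediate prefixes $L_j$ are in place, the remaining steps of the exchange argument are routine.
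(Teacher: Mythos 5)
Your proof is correct and rests on the same core observation as the paper's: the hypothesis $N(v)\cap Right = N(X)\cap Right$ means $v$ contributes no new union of neighborhoods at the current cut or at any later one, so inserting $v$ now cannot raise the width of the best completion. The paper's own proof states this in two informal sentences (asserting $\un(Left)=\un(Left\cup\{v\})$ and that the remaining vertices ``interact in the exact same way''), whereas you make both halves precise --- the surjection $S\mapsto S\setminus\{v\}$ yielding $|\un(Left\cup\{v\})|\le|\un(Left)|$, and the exchange argument with the hypothesis propagated to every later prefix $L_j$ --- which is a welcome tightening of the same route rather than a different one.
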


\subsubsection{Relative Neighborhood Heuristic}

For a cut $(Left, Right)$ and a vertex $v$ define
\begin{align*}
Internal(v) &= (N(v) \cap N(Left)) \cap Right \\
External(v) &= (N(v) \setminus N(Left)) \cap Right
\end{align*}

In the original formulation by Sharmin~\cite{Practicalaspects} $\frac{|External(v)|}{|Internal(v)|}$ is used as a score function.
However, if we use $\frac{|External(v)|}{|Internal(v)| + |External(v)|} = \frac{|External(v)|}{|N(v) \cap Right|}$
we get the same ordering by Lemma~\ref{order-preserving}, without having an edge case for dividing by zero.
Furthermore, in contrast to Sharmin's proposal of checking for each vertex 
$w \in N(v)$ if $w \in N(Left) \cap Right$ or not, we can compute these sets directly by performing set operations.
We will refer to this heuristic by $\textsc{RelativeNeighborhood}$.

\begin{lemma}
[Appendix~\ref{prf:order-preserving}]
\label{order-preserving}
    The mapping $\frac{a}{b} \mapsto \frac{a}{a + b}$ is order preserving.
\end{lemma}

Two variations on this heuristic can be obtained through the score functions $\frac{|External(v)|}{|N(v)|}$ and $1 - \frac{|Internal(v)|}{|N(v)|}$, 
which work slightly better for sparse random graphs and extremely well for dense random graphs respectively.
We will refer to these two variations by $\textsc{RelativeNeighborhood}_2$ and $\textsc{RelativeNeighborhood}_3$.

One can easily see that the running time of these three algorithms is $O(n^3)$ and the required space amounts to $O(n)$.
Notice however that this algorithm only gives us a decomposition.
If we need to know the corresponding boolean-width we need to compute it afterwards, 
for instance by iteratively applying \Call{Increment-UN}{} on the vertices
in the decomposition, and taking the maximum value.
This would require an additional $O(n^2 \cdot 2^k)$ time and $O(n \cdot 2^k)$ space, where $k$ is the boolean-width of the decomposition.

\subsubsection{Least Cut Value Heuristic}

The \textsc{LeastCutValue} heuristic by Sharmin~\cite{Practicalaspects} greedily selects the next vertex $v \in Right$ 
that will have the smallest boolean dimension across the cut $(Left \cup \{v\}, Right \setminus \{v\})$.
This vertex is obtained by constructing the bipartite graph $BG = G[Left \cup \{v\}, Right \setminus \{v\}]$ for each $v \in Right$,
and counting the number of maximal independent sets of $BG$ using the $CCM_{IS}$~\cite{maxisshamin} algorithm on $BG$, 
with the time of $CCM_{IS}$ being exponential in $n$.

\subsubsection{Incremental Unions of Neighborhoods Heuristic}

Generating a bipartite graph and then calculating the number of maximal independent sets is a computational expensive approach.
A different way to compute the boolean dimension of each cut is by reusing the neighborhoods from the previous cut,
similarly to \textsc{Incremental-UN-exact}.
We present a new algorithm, called the \textsc{Incremental-UN-heuristic}, in Algorithm~\ref{alg:incremental-un-heuristic}.
A useful property of this algorithm is that the running time is output sensitive.
It follows that if a decomposition is not found within reasonable time, then the decomposition that would have been generated
is not useful for practical algorithms.

\begin{algorithm}
\caption{Greedy heuristic that incrementally keeps track of the Unions of Neighborhoods.}
\begin{algorithmic}[1]
\Procedure{Incremental-UN-Heuristic}{$G, init$}

\State{$Decomposition \gets (init)$}
\State{$Left, Right \gets \{init\}, V \setminus \{init\}$}
\State{$\un_{Left} \gets \{\emptyset, N(init) \cap Right\}$}

\While{{$Right \neq \emptyset$}}
    \State{$Candidates \gets$ set returned by candidate set strategy}
    \If{there exists $v \in Candidates$ belonging to a trivial case}
        \State{$chosen \gets v$}
        \State{$\un_{chosen} \gets $ \Call{Increment-UN}{$G, Left, \un_{Left}, v$}}
    \Else{}
        \ForAll{$v \in Candidates$}
            \State{$\un_v \gets $ \Call{Increment-UN}{$G, Left,\un_{Left}, v$}}
            \If{$chosen$ is undefined $\textbf{or}$ $|\un_v| < |\un_{chosen}|$}
                \State{$chosen \gets v$}
                \State{$\un_{chosen} \gets \un_v$}
            \EndIf{}
        \EndFor{}
    \EndIf{}

    \State{$Decomposition \gets Decomposition \cdot chosen$}
    \State{$Left \gets Left \cup \{chosen\}$}
    \State{$Right \gets Right \setminus \{chosen\}$}
    \State{$\un_{Left} \gets \un_{chosen}$}
\EndWhile{}
\State{\textbf{return} $Decomposition$}

\EndProcedure{}

\end{algorithmic}
\label{alg:incremental-un-heuristic}
\end{algorithm}

\begin{theorem}
[Appendix~\ref{prf:incrementalheuristic}]
\label{theorem:incrementalheuristic}
    The \textsc{Incremental-UN-heuristic} procedure runs in $O(n^3 \cdot 2^k)$ time using $O(n \cdot 2^k)$ space, 
	where $k$ is the boolean-width of the resulting linear decomposition.
\end{theorem}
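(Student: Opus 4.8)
The plan is to charge the entire cost of \textsc{Incremental-UN-heuristic} to its calls to \textsc{Increment-UN}, for which Lemma~\ref{increment-un-correctness} gives $O(n \cdot |\un_X|)$ time and space on an input set $X$. First I would note that the outer loop runs exactly $n-1$ times, moving one vertex from $Right$ into $Left$ per iteration, and that at the start of iteration $i$ the set $Left$ equals the length-$i$ prefix $\{init, chosen_1, \dots, chosen_{i-1}\}$ of the ordering that the procedure ultimately returns. Hence every $Left$ that arises is the smaller side of a cut induced by a node of the resulting linear decomposition, so by the definition of boolean-width $\log_2 |\un_{Left}| = \booldim(Left) \leq k$, i.e.\ $|\un_{Left}| \leq 2^k$ throughout the execution.

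The structural fact that makes the bounds work is that a single call to \textsc{Increment-UN} on input $Left$ inserts at most two sets into $\mathcal{U}$ for each element of $\un_{Left}$ (one from each of the two set insertions performed per iteration of its loop), so its output obeys $|\un_v| \leq 2|\un_{Left}| \leq 2^{k+1}$ for \emph{every} candidate, chosen or not. For the running time I would bound one iteration of the main loop: in the worst case no trivial case applies and the algorithm calls \textsc{Increment-UN} once for each $v \in Candidates$, with $|Candidates| \leq |Right| \leq n$. By Lemma~\ref{increment-un-correctness} each call costs $O(n \cdot |\un_{Left}|) = O(n \cdot 2^k)$, where it is essential that this depends on the size of the \emph{input} $\un_{Left}$ and not on the output; hence one iteration costs $O(n^2 \cdot 2^k)$ and the $n-1$ iterations together cost $O(n^3 \cdot 2^k)$.

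For the space bound I would observe that only a constant number of unions-of-neighborhoods sets are kept alive at once: the current $\un_{Left}$, the running best $\un_{chosen}$, and the single candidate set $\un_v$ under evaluation, since the loop over $Candidates$ may discard each rejected $\un_v$ before computing the next. By the doubling fact each of these has at most $2^{k+1}$ elements, and as each element is an $n$-bit bitset they occupy $O(n \cdot 2^k)$ space; the structures $Decomposition$, $Left$ and $Right$ add only $O(n)$. It then remains to check that the per-iteration overhead — building the candidate set $N^2(Left) \cap Right$ and testing the trivial cases of Lemma~\ref{lemma:trivialcases} — is polynomial in $n$ and so is absorbed into the $O(n^2 \cdot 2^k)$ already spent, which leaves the time at $O(n^3 \cdot 2^k)$ and the space at $O(n \cdot 2^k)$.

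The step I expect to carry the real weight is the space accounting for \emph{rejected} candidates: a vertex $v$ not chosen at step $i$ need not be a prefix of the final ordering, so $|\un_v|$ is a priori not controlled by $k$. The doubling property of \textsc{Increment-UN} is precisely what rescues the bound, tying every candidate's size to $|\un_{Left}| \leq 2^k$ rather than to its own cut; establishing that property cleanly from the pseudocode is the one genuinely load-bearing step, after which the remaining loop-counting is routine.
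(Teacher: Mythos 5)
Your proposal is correct and follows essentially the same argument as the paper: bound $|\un_{Left}|\leq 2^k$ because every prefix $Left$ is a cut of the returned decomposition, charge $O(n\cdot 2^k)$ per \textsc{Increment-UN} call via Lemma~\ref{increment-un-correctness}, multiply by $n$ candidates and $n$ iterations, and note that only a constant number of $\un$ sets are held at once for the space bound. Your explicit doubling observation for rejected candidates is a nice clarification of a point the paper leaves implicit (it is already absorbed in the $O(n\cdot|\un_X|)$ space bound of Lemma~\ref{increment-un-correctness}), but it does not change the route.
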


\ifarchive{}

\subsubsection{Unsuccessful ideas}

\begin{itemize}
\item First Improvement --- Preliminary experiments pointed out that it not only gave worse results in terms of boolean-width, 
but it also increased the time needed to compute a decomposition, 
which can be explained by the output sensitivity of the \textsc{Incremental-UN-heuristic}.
In other words, even though the best improvement strategy takes more time to determine the next 
vertex for a single iteration, it is worthwhile to put effort in finding a good cut, as it also decreases the
time for future cuts.
\item Lookaheads --- This technique does not only look at the change of $\un$ resulting from choosing a candidate $v$, 
but also recursively considers the changes of the algorithm after $v$ has been chosen, up to a fixed depth.
With each level of depth added, the time complexity increases with a factor $n$, 
but experiments turned out that the benefits were only marginal.
\item Minimal Neighborhood Cover --- This heuristic tries to minimize the number of neighborhoods in $Left$ 
that are needed to cover the neighborhood of the vertex to be chosen.
\item Max Cardinality Search --- This heuristics selects vertices in such an order that at each step the vertex with most 
neighbors in $Left$ is chosen.
In practice this heuristic performed similar to other already known polynomial heuristics.
\end{itemize}

\fi

\section{Vertex subset problems}
\label{section:sigmarhoproblems}
Boolean decompositions can be used to efficiently solve a class of vertex subset problems 
called $(\sigma, \rho)$ vertex subset problems, which were introduced by Telle~\cite{Telle94complexityof}.
This class of problems consists of finding a $(\sigma, \rho)$-set of maximum or minimum cardinality
and contains well known problems such as the maximum independent set, 
the minimum dominating set and the maximum induced matching problem. 
The running time of the algorithm for solving these problems is $O(n^4 \cdot {nec_d(T, \delta)}^3)$~\cite{fastdynamicprogramming},
where $nec_d(T, \delta)$ is the number of equivalence classes of a problem specific equivalence relation, which can be bounded in terms of boolean-width.
In Section~\ref{section:experiments} we investigate how close the value of $nec_d(T, \delta)$ comes to any of the theoretical bounds. 

\subsection{Definitions}

\begin{definition}[$(\sigma, \rho)$-set]
Let $G = (V, E)$ be a graph. 
Let $\sigma$ and $\rho$ be finite or co-finite subsets of $\mathbb{N}$.
A subset $X \subseteq V$ is called a $(\sigma, \rho)$-set if the following holds
\[
	\forall v \in V : |N(v) \cap X| \in
	  \begin{cases}
			\sigma & \text{if } v \in X, \\
		\rho       & \text{if } v \in V \setminus X.
  \end{cases}
\]
\end{definition}

In order to confirm if a set $X$ is a $(\sigma, \rho)$-set we have to count the number of neighbors each vertex $v \in V$ has in $X$, where it suffices to count up until a certain number of neighbors.
As an example, when we want to confirm if a set $X$ is an independent set, which is equivalent to checking if $X$ is a $(\{0\}, \mathbb{N})$-set,
it is irrelevant if a vertex $v$ has more than one neighbor in $X$. 
We capture this property in the function $d: 2^\mathbb{N} \to \mathbb{N}$, which is defined as follows:

\begin{definition}[d-function]
Let $d(\mathbb{N}) = 0$. 
For every finite or co-finite set $\mu \subseteq \mathbb{N}$, let $d(\mu) = 1 + \min(\max\limits_{x\in \mathbb{N}} x : x \in \mu, \max\limits_{x \in \mathbb{N}} x : x \notin \mu)$.
Let $d(\sigma, \rho) = \max(d(\sigma), d(\rho))$.
\end{definition}

\begin{definition}[d-neighborhood]
Let $G = (V, E)$ be a graph. 
Let $A \subseteq V$ and $X \subseteq A$.
The \emph{d-neighborhood} of $X$ with respect to $A$, denoted by $N_A^d(X)$, is a multiset of vertices from $\overline{A}$, 
where a vertex $v \in \overline{A}$ occurs $\min(d, |N(v) \cap X|)$ times in $N_A^d(X)$.
A d-neighborhood can be represented as a vector of length $|\overline{A}|$ over $\{0, 1, \dots, d\}$.
\end{definition}

\begin{definition}[d-neighborhood equivalence]
Let $G = (V, E)$ be a graph and $A \subseteq V$. 
Two subsets $X, Y \subseteq A$ are said to be \emph{d-neighborhood equivalent} with respect to $A$, denoted by $X \equiv_A^d Y$, 
if it holds that $\forall v \in \overline{A}: \min(d, |X \cap N(v)|) = \min(d, |Y \cap N(v)|)$.
The number of equivalence classes of a cut $\cut{A}$ is denoted by $nec(\equiv_A^d)$.
The number of equivalence classes $nec_d(T, \delta)$ of a decomposition $(T, \delta)$ is defined as $\max(\nec{A},\nec{\overline{A}})$ over all cuts $\cut{A}$ of $(T, \delta)$.
\end{definition}

Note that $N_A^1(X) = N(X) \cap \overline{A}$. 
It can then be observed that $|\un(A)| = nec(\equiv_A^1)$~\cite[Theorem 3.5.5]{vatshelle}
Also note that $X \equiv_A^d Y$ if and only if $N_A^d(X) = N_A^d(Y)$. 

\subsection{Bounds on the number of equivalence classes}

We present a brief overview of the most relevant bounds that are currently known, 
for which we make use of a \emph{twin class partition} of a graph.

\begin{definition}[Twin class partition]
Let $G = (V, E)$ be a graph of size $n$ and let $A \subseteq V$.
The \emph{twin class partition} of $A$ is a partition of $A$ such that $\forall x, y \in A$, $x$ and $y$ are in the same partition class if and only if $N(x) \cap \overline{A} = N(y) \cap \overline{A}$.
The number of partition classes of $A$ is denoted by $ntc(A)$ and it holds that $ntc(A) \leq \min(n, 2^{\booldim(A)})$~\cite{boolean-width}.
\end{definition}

For all bounds listed below, let $G = (V, E)$ be a graph of size $n$ and let $d$ be a non-negative integer.
Let $\cut{A}$ be a cut induced by any node of a decomposition $(T, \delta)$ of $G$, and let $k = \booldim(A) = nec(\equiv_A^1)$. 

\begin{lemma}~\cite[Lemma 5]{fastdynamicprogramming}
\label{lemma:2dk}
$\nec{A} \leq 2^{d \cdot k^2}$.
\end{lemma}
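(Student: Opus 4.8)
The plan is to recast $\nec{A}$ as the number of distinct \emph{capped coverage vectors} and then control this count using the rank-like structure hidden in $k = \booldim(A)$, where $|\un(A)| = nec(\equiv_A^1) = 2^k$.

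First I would reformulate. By definition $X \equiv_A^d Y$ holds exactly when the vector $\big(\min(d, |N(v) \cap X|)\big)_{v \in \overline{A}}$ agrees for $X$ and $Y$, so $\nec{A}$ equals the number of distinct such vectors over all $X \subseteq A$. Two reductions shrink the problem. First, the entry at $v$ depends only on the twin class of $v$ in $\overline{A}$, so the vector is really indexed by the twin classes of $\overline{A}$; by symmetry of the boolean dimension ($\booldim(\overline{A}) = \booldim(A) = k$) and $ntc(\overline{A}) \le 2^{\booldim(\overline{A})}$ there are at most $2^k$ of them. Second, writing $\mathbf{x} = (|X \cap D_\ell|)_\ell$ over the $s \le 2^k$ twin classes $D_\ell$ of $A$, each entry is $\min\!\big(d, \sum_{\ell : D_\ell \subseteq N(v)} x_\ell\big)$; since raising any single $x_\ell$ above $d$ never changes a capped sum, I may assume $0 \le x_\ell \le d$. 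These steps already give the crude bound $(d+1)^{2^k}$, which is far weaker than the target, and this gap shows that the real work is to exploit that the $\le 2^k$ twin classes are highly dependent.

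Next I would extract this dependence through bounded VC dimension. The neighborhood system $\{N(v) \cap A : v \in \overline{A}\}$ has at most $2^k$ distinct members, so by Sauer--Shelah its VC dimension is at most $k$ (shattering $k+1$ vertices would force $2^{k+1}$ distinct traces). I would encode each capped coverage vector by its nested level sets $L_1 \supseteq L_2 \supseteq \cdots \supseteq L_d$, where $L_i = \{v \in \overline{A} : |N(v) \cap X| \ge i\}$. The bottom level is easy: $L_1 = N(X) \cap \overline{A} \in \un(A)$, contributing a factor $|\un(A)| = 2^k$. For $i \ge 2$ the set $L_i$ is \emph{not} a union of neighborhoods, and this is exactly where the VC bound is needed: the number of threshold patterns $\{v : \sum_{\ell} a_{v\ell}\, x_\ell \ge i\}$ realizable by integer vectors $\mathbf{x} \in \{0,\dots,d\}^s$ grows only polynomially, with degree governed by the VC dimension $k$. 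Composing these degree-$k$ counts across the $d$ levels, and multiplying by the $2^k$ choices for $L_1$, is designed to collapse the exponent from $2^k$ down to $d k^2$, yielding $\nec{A} \le 2^{d k^2}$.

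The main obstacle I anticipate is precisely the treatment of the higher level sets $L_i$ for $i \ge 2$: unlike $L_1$ they are not elements of $\un(A)$, so their count cannot be read off from boolean dimension directly and must be bounded through the shatter function, and one must verify that the degree-$k$ polynomial growth composes correctly across the $d$ nested levels without silently reintroducing a $2^k$ factor. An alternative route that sidesteps the explicit arrangement count is the canonical-representative technique: argue that every class of $\equiv_A^d$ is realized by a minimal $R \subseteq A$ whose size is controlled (each vertex of $R$ admits a witness in $\overline{A}$ with at most $d$ neighbors in $R$), and then count such bounded representatives as multisets over the $\le 2^k$ twin classes of $A$. Making the representative small enough to reach $2^{d k^2}$ again hinges on the same dependence among the twin classes, so that step would be the crux of this approach as well.
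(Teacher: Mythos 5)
Your main route does not close. The whole bound hinges on counting, for each level $i \ge 2$, the sets $L_i = \{v \in \overline{A} : |N(v)\cap X| \ge i\}$, and this is exactly the step you flag as the obstacle and then never resolve. Sauer--Shelah bounds the number of \emph{traces} of the neighborhood set system on a subset of the ground set; it says nothing directly about the number of realizable threshold patterns $\{v : \sum_\ell a_{v\ell}x_\ell \ge i\}$ of capped weighted sums over $\mathbf{x}\in\{0,\dots,d\}^s$ with $s$ as large as $2^k$. No argument is given that this count is $2^{O(k)}$ per level rather than, say, polynomial in $2^k$ (which would only reproduce the $(d+1)^{2^k}$-type bound you already dismissed), and it is not a routine consequence of VC dimension $\le k$. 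As written, the proof asserts the conclusion of its hardest step ("is designed to collapse the exponent") rather than proving it.

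The standard proof --- and the one implicit in this paper via Lemma~\ref{lemma:necntc} --- is the representative route you relegate to a closing sentence. One shows that every class of $\equiv_A^d$ contains an inclusion-minimal representative $R$ with $|R| \le d\cdot mim(A) \le d\cdot \booldim(A) = d k$: by minimality each $x\in R$ admits a witness $y_x\in\overline{A}$ with $x\in N(y_x)$ and $|N(y_x)\cap R|\le d$, each such witness serves at most $d$ elements of $R$, and from the resulting $|R|/d$ witnesses one extracts an induced matching in $G[A,\overline{A}]$ of that size, whence $|R|\le d\cdot mim(A)$; since $mim(A)\le\booldim(A)$, this gives $|R|\le dk$. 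Since $N_A^d(R)$ depends only on the multiset of twin classes of $A$ met by $R$, and $ntc(A)\le 2^k$, the number of classes is at most $ntc(A)^{dk}\le (2^k)^{dk}=2^{dk^2}$. You correctly identify both ingredients (bounded representatives, counting over twin classes) but explicitly leave the size bound --- the crux --- unproved, so neither of your two routes constitutes a proof.
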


\begin{lemma}~\cite[Lemma 5.2.2]{vatshelle}
$\nec{A} \leq {(d + 1)}^{\min(ntc(A), ntc(\overline{A}))}$.
\end{lemma}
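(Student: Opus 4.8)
The statement to prove is the bound
\[
\nec{A} \leq {(d + 1)}^{\min(ntc(A), ntc(\overline{A}))}.
\]

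\textbf{Approach.} The plan is to bound the number of $d$-neighborhood equivalence classes of subsets $X \subseteq A$ by counting the number of distinct $d$-neighborhood vectors they can produce, and to observe that these vectors are entirely determined by a coarser, much smaller piece of data than $X$ itself. The key insight is that two vertices of $\overline{A}$ lying in the same twin class of $\overline{A}$ have identical neighborhoods into $A$, so they must record identical entries in every $d$-neighborhood vector. Hence a $d$-neighborhood vector, though nominally indexed by $|\overline{A}|$ coordinates, really carries at most $ntc(\overline{A})$ independent coordinates, each taking a value in $\{0, 1, \dots, d\}$. By symmetry of the construction we can also argue via $ntc(A)$, and taking the better of the two gives the minimum in the exponent.

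\textbf{Key steps.} First I would recall from the definitions that $X \equiv_A^d Y$ if and only if $N_A^d(X) = N_A^d(Y)$, so that $\nec{A}$ equals the number of distinct vectors $N_A^d(X)$ as $X$ ranges over subsets of $A$; it therefore suffices to upper-bound the size of the set $\set{N_A^d(X)}{X \subseteq A}$. Second, I would fix the twin class partition of $\overline{A}$ and show that for any vertices $u, w \in \overline{A}$ in the same twin class, $N(u) \cap A = N(w) \cap A$, so that $\min(d, |X \cap N(u)|) = \min(d, |X \cap N(w)|)$ for every $X \subseteq A$. This means the coordinates of $N_A^d(X)$ are constant on each twin class, so the vector is determined by one value per twin class. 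Third, since each such value lies in $\{0, \dots, d\}$, the total number of distinct vectors is at most ${(d+1)}^{ntc(\overline{A})}$, giving $\nec{A} \leq {(d+1)}^{ntc(\overline{A})}$. Finally, I would invoke the symmetry $nec(\equiv_A^d) = nec(\equiv_{\overline{A}}^d)$ (the $d$-analogue of the $|\un(A)| = |\un(\overline{A})|$ symmetry noted earlier) to obtain the companion bound $\nec{A} \leq {(d+1)}^{ntc(A)}$, and combine the two into the stated minimum.

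\textbf{Main obstacle.} The main technical point to nail down carefully is the symmetry step that lets $ntc(A)$ appear in place of $ntc(\overline{A})$. The direct twin-class argument naturally produces the exponent $ntc(\overline{A})$, since it is the twin classes of the \emph{opposite} side that collapse the vector coordinates; obtaining $ntc(A)$ requires either applying the whole argument to the reversed cut and then transporting the count back via $nec(\equiv_A^d) = nec(\equiv_{\overline{A}}^d)$, or else directly reasoning that $d$-neighborhood-equivalent subsets can be chosen to respect the twin classes of $A$. I expect establishing this $d$-level symmetry cleanly — rather than merely the $d=1$ case already cited for $|\un|$ — to be the part that needs the most care, after which the counting is routine.
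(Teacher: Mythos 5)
Your bound via the twin classes of $\overline{A}$ is sound: coordinates of the vector $N_A^d(X)$ indexed by vertices in the same twin class of $\overline{A}$ coincide, since such vertices have identical neighborhoods in $A$, so there are at most $(d+1)^{ntc(\overline{A})}$ distinct vectors and hence $\nec{A} \leq (d+1)^{ntc(\overline{A})}$. (The paper itself does not prove this lemma; it only cites Vatshelle, so there is no in-paper proof to compare against.)

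The gap is in how you propose to get $ntc(A)$ into the exponent. The symmetry $nec(\equiv_A^d) = nec(\equiv_{\overline{A}}^d)$ that you want to invoke as ``the $d$-analogue'' of $|\un(A)| = |\un(\overline{A})|$ is false for $d \geq 2$; the paper states this explicitly in its conclusion (``the number of equivalence classes is not symmetric, i.e., for a cut $\cut{A}$ $nec_d(A) \neq nec_d(\overline{A})$''), and indeed this asymmetry is presented there as a genuine obstacle for heuristics. So the route you flag as your main plan cannot work. The alternative you mention only in passing is the correct one and needs to be carried out: every $X \subseteq A$ is $\equiv_A^d$-equivalent to a set $X'$ with $|X' \cap C| = \min(d, |X \cap C|)$ for each twin class $C$ of $A$, because for $v \in \overline{A}$ each class $C$ is either entirely adjacent or entirely non-adjacent to $v$, so $|X \cap N(v)| = \sum_{C \subseteq N(v)} |X \cap C|$ and capping each summand at $d$ does not change $\min(d, |X \cap N(v)|)$. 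The equivalence class of $X$ is therefore determined by the vector $\bigl(\min(d,|X \cap C|)\bigr)_C$ with entries in $\{0,\dots,d\}$, giving $\nec{A} \leq (d+1)^{ntc(A)}$ directly, without any appeal to symmetry. Combining the two independent bounds yields the stated minimum.
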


\begin{lemma}[Appendix~\ref{prf:necntc}]
\label{lemma:necntc}
$\nec{A} \leq {ntc(A)}^{d \cdot k}$.
\end{lemma}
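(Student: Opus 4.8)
The plan is to show that every subset $X \subseteq A$ is $d$-neighborhood equivalent to a \emph{small} subset $R \subseteq X$ using at most $d \cdot k$ vertices, and that the class of $R$ depends only on which twin classes its vertices occupy; counting the possible twin-class profiles of such small sets then yields $ntc(A)^{dk}$. Before constructing $R$ I would record two structural facts. First, $N_A^d(X)$ is determined by the nested level sets $L_j(X) = \set{v \in \overline{A}}{|N(v) \cap X| \ge j}$ for $j = 1, \dots, d$, since the multiplicity of $v$ is the largest $j \le d$ with $v \in L_j(X)$. Second, because each $N(v) \cap A$ is a union of twin classes, $N_A^d(X)$ depends only on the vector $(\min(d, |X \cap C_i|))_i$ over the twin classes $C_i$ of $A$; this is what eventually lets me encode a representative by twin classes rather than by individual vertices.

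The engine is a covering lemma: if $D \subseteq \overline{A}$ and $Q \subseteq A$ satisfy $D \subseteq N(Q) \cap \overline{A}$, then there is $P \subseteq Q$ with $|P| \le k$ and $D \subseteq N(P) \cap \overline{A}$. I would prove it by taking a minimal such $P$. Minimality hands each $x \in P$ a private witness $w_x \in D$ adjacent to $x$ but to no other vertex of $P$, so the $2^{|P|}$ sets $N(P') \cap \overline{A}$ over $P' \subseteq P$ are pairwise distinct (separated by the $w_x$), all lying in $\un(A)$; hence $2^{|P|} \le |\un(A)| = 2^{k}$ and $|P| \le k$.

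With the covering lemma in hand I build $R$ in $d$ phases, maintaining the invariant that after phase $j$ the set $R_j \subseteq X$ is \emph{correct up to level $j$}, meaning $\min(j, |N(v) \cap R_j|) = \min(j, |N(v) \cap X|)$ for all $v \in \overline{A}$. Phase $1$ simply covers $L_1(X) = N(X) \cap \overline{A}$ by at most $k$ vertices of $X$. For the step from $j$ to $j+1$, the only vertices not yet correct are those in $D = \set{v}{|N(v) \cap R_j| = j,\ |N(v) \cap X| \ge j+1}$; each has a neighbor in $X \setminus R_j$, so the covering lemma gives $P \subseteq X \setminus R_j$ with $|P| \le k$ and $D \subseteq N(P) \cap \overline{A}$, and I set $R_{j+1} = R_j \cup P$. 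The step I expect to be the most delicate is verifying that adding $P$ repairs every vertex of $D$ to level $j+1$ \emph{without disturbing} the already-correct vertices: a vertex whose count in $X$ is at most $j$ has all its $X$-neighbors already inside $R_j$ and so gains nothing from $P \subseteq X \setminus R_j$, while overshooting at deeper vertices is harmless because only $\min(d, \cdot)$ matters.

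After $d$ phases, $R = R_d \subseteq X$ satisfies $R \equiv_A^d X$ and $|R| \le dk$. Since $N_A^d(R)$ depends only on the twin classes occupied by $R$, each $d$-neighborhood class is encoded by a choice of at most $dk$ twin classes, of which there are at most $ntc(A)^{dk}$; a routine counting step (bounding sets of size at most $dk$ by sequences of $dk$ twin-class symbols) turns the ``at most $dk$'' into the clean exponent, giving $\nec{A} \le ntc(A)^{dk}$.
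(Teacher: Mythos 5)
Your proof is correct, but it takes a genuinely different route from the paper's. The paper's own argument is two lines of citation-chaining: it introduces the maximum induced matching width $mim(A)$, invokes $mim(A) \leq \booldim(A)$ and Vatshelle's bound $\nec{A} \leq {ntc(A)}^{d \cdot mim(A)}$, and concludes. You never mention $mim$ at all: your covering lemma (a minimal cover $P$ has a private witness $w_x$ for each $x \in P$, so the $2^{|P|}$ sets $N(P') \cap \overline{A}$ are pairwise distinct members of $\un(A)$ and hence $|P| \leq k$) supplies exactly the quantitative role that $mim(A) \leq \booldim(A)$ plays in the cited chain, and your $d$-phase construction of a representative $R \subseteq X$ with $|R| \leq dk$ reproduces the content of the cited Lemma 5.2.3 with $mim$ replaced by $k$ throughout; the phase analysis (already-correct vertices have all their $X$-neighbors inside $R_j$, so $P \subseteq X \setminus R_j$ cannot disturb them, and overshooting past level $d$ is absorbed by the cap) is sound. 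What the paper's approach buys is brevity; what yours buys is a self-contained proof that needs neither external lemma nor the extra parameter. The only soft spot is the final ``routine counting step'': encoding representatives of size \emph{at most} $dk$ by multisets of twin classes gives $\binom{ntc(A)+dk}{dk}$ rather than ${ntc(A)}^{dk}$ on the nose, and the two can differ in degenerate cases (indeed for $ntc(A)=1$ the stated inequality itself fails for a single vertex with a neighbor). This imprecision is inherited from the cited source rather than introduced by you, and for $ntc(A) \geq 2$ it costs at most a constant factor, so it does not affect the substance of the argument.
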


By Lemma~\ref{lemma:2dk} we conclude that we can solve $(\sigma, \rho)$ problems in $O^*(8^{dk^2})$. 
This shows that applications are more computationally expensive than using heuristics to find a decomposition.

\section{Experiments}
\label{section:experiments}

The experiments in this section are performed on a 64-bit Windows 7 computer, with a 3.40 GHz Intel Core i5-4670 CPU and 8GB of RAM. 
We implemented the algorithms using the C\# programming language and 
compiled our programs using the \emph{csc} compiler that comes with Visual Studio 12.0.\footnote{Source code of our implementations can be found on \url{https://github.com/Chiel92/boolean-width} and \url{https://github.com/FrankvH/BooleanWidth}}

\subsection{Comparing Heuristics on random graphs}
We will look at the performance of heuristics on randomly generated graphs, 
for which we used the Erd\"{o}s-R\'{e}nyi-model~\cite{erdos} to generate a fixed set of random graphs with varying edge probabilities.
\ifarchive{}
By using the same set of graphs for each heuristic, we rule out the possibility that one heuristic can get a slightly easier set of graphs than another.
\fi
In these experiments we start a heuristic once for each possible initial vertex, so $n$ times in total.
For the \textsc{RelativeNeighborhood} heuristic we select the best decomposition based upon the sum of the score function for all cuts, since
computing all actual linear boolean-width values would take $O(n^3 \cdot 2^k)$ time, thereby removing the purpose of this polynomial time heuristic.
For the set $Candidates$ we take $N^2(Left) \cap Right$, which avoids that we exclude certain optimal solutions, as opposed to Sharmin~\cite{Practicalaspects}, 
who restricted this set to $N(Left) \cap Right$. However, this does not affect the results significantly.

We let the edge probability vary between 0.05 and 0.95 with steps of size 0.05.
For each edge probability value, we generated 20 random graphs.
The result per edge probability is taken to be the average boolean-width over these 20 graphs, which are shown in Figure~\ref{fig:random_small}.
It can be observed that the \textsc{Incremental-UN-heuristic} procedure performs near optimal.
Furthermore we see that the \textsc{RelativeNeighborhood} variants perform somewhere in between the optimal value and the value of random decompositions.

\begin{figure}[!htb]
    \centering
    \includegraphics[width=\textwidth]{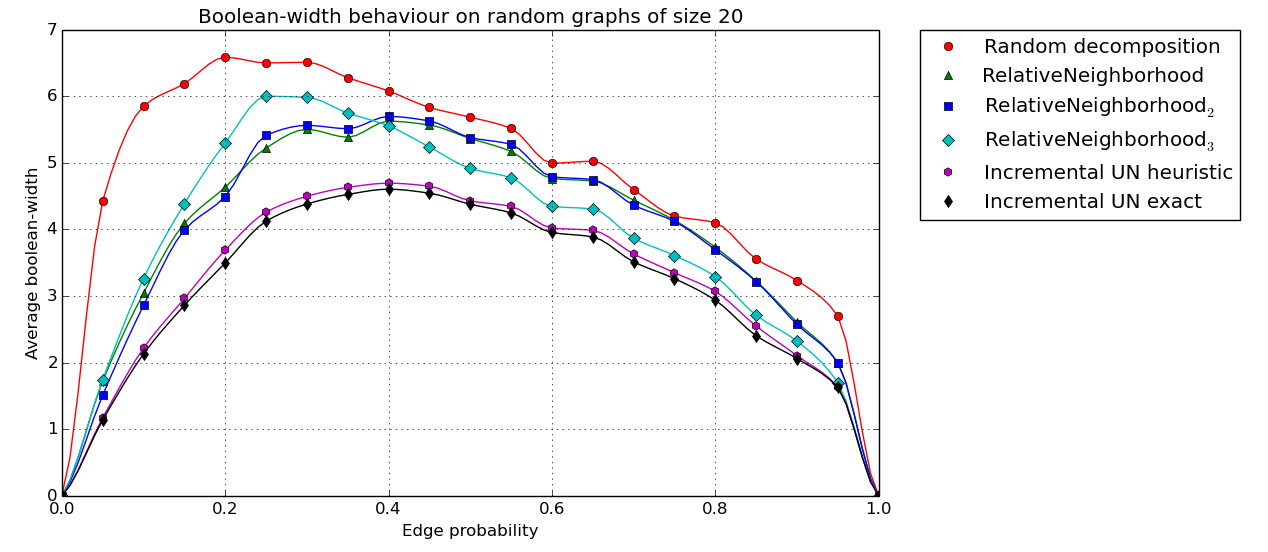}
    \caption{Performance of different heuristics on random generated graphs consisting of 20 vertices, with varying edge probabilities, in terms of linear boolean-width.}
\label{fig:random_small}
\end{figure}

\subsection{Comparing heuristics on real-world graphs}

In order to get an idea of how the \textsc{Incremental-UN-heuristic} compares to existing heuristics
we compare them by both the boolean-width of the generated decomposition and the time needed for computation.
We cannot compare the heuristics to the optimal solution, because computing an exact decomposition is not feasible on these graphs.
The graphs that were used come from $Treewidthlib$~\cite{twlib}, 
a collection of graphs that are used to benchmark algorithms using treewidth and related graph problems.

We ran the three different heuristics mentioned in Section~\ref{section:heuristics} with $Candidates = Right$ 
and with an additional two variations on the \textsc{Incremental-UN-heuristic} (IUN) by varying the set of start vertices.
The first variation, named 2-IUN, has two start vertices which are obtained through a single and double BFS respectively. 
The n-IUN heuristic uses all possible start vertices.
For all other heuristics we obtained the start vertex through performing a double BFS.
In Table~\ref{table:lbwheuristic} and~\ref{table:timeheuristic} we present the results of our experiments. 

\begin{table}[ht!]
\centering
\caption{\small{Linear boolean-width of the decompositions returned by different heuristics.}}
\begin{tabular}{|c|cc|ccccc|}
\hline
Graph & $|V|$ & Edge Density & Relative & LeastCut & IUN & 2-IUN & n-IUN  \\
\hline
barley & 48 & 0.11 & 5.70 & 5.91 & 5.91 & 4.70 & 4.58  \\
pigs-pp & 48 & 0.12 & 10.35 & 7.13 & 7.13 & 7.13 & 6.64  \\
david & 87 & 0.11 & 9.38 & 6.27 & 6.27 & 6.27 & 5.86  \\
celar04-pp & 114 & 0.08 & 11.67 & 7.27 & 7.27 & 7.27 & 7.27  \\
1bkb-pp & 127 & 0.18 & 16.81 & 9.98 & 9.98 & 9.53 & 9.53  \\
miles1500 & 128 & 0.64 & 8.17 & 5.58 & 5.58 & 5.58 & 5.29  \\
celar10-pp & 133 & 0.07 & 10.32 & 11.95 & 11.95 & 7.64 & 6.91  \\
munin2-pp & 167 & 0.03 & 15.17 & 9.61 & 9.61 & 9.61 & 7.61  \\
mulsol.i.5 & 186 & 0.23 & 7.55 & 5.29 & 5.29 & 5.29 & 3.58  \\
zeroin.i.2 & 211 & 0.16 & 7.92 & 4.46 & 4.46 & 4.46 & 3.81  \\
boblo & 221 & 0.01 & 19.00 & 4.32 & 4.32 & 4.32 & 4.00  \\
fpsol2.i-pp & 233 & 0.40 & 5.58 & 6.07 & 6.07 & 5.78 & 4.81  \\
munin4-wpp & 271 & 0.02 & 13.04 & 9.27 & 9.27 & 9.27 & 7.61  \\
\hline
\end{tabular}
\label{table:lbwheuristic}
\end{table}

\begin{table}[ht!]
\centering
\caption{\small{Time in seconds of the heuristics used to find linear boolean decompositions.}}
\begin{tabular}{|c|cc|ccccc|}
\hline
Graph & $|V|$ & Edge Density & Relative & LeastCut & IUN & 2-IUN & n-IUN  \\
\hline
barley & 48 & 0.11 & $<0.01$ & 0.18 & 0.01 & 0.02 & 0.16  \\
pigs-pp & 48 & 0.12 & $<0.01$ & 0.76 & 0.02 & 0.04 & 0.52  \\
david & 87 & 0.11 & 0.02 & 3.15 & 0.04 & 0.06 & 1.62  \\
celar04-pp & 114 & 0.08 & 0.04 & 5.73 & 0.14 & 0.23 & 9.85  \\
1bkb-pp & 127 & 0.18 & 0.06 & 198.05 & 1.14 & 4.18 & 107.32  \\
miles1500 & 128 & 0.64 & 0.06 & 44.57 & 0.10 & 0.14 & 7.05  \\
celar10-pp & 133 & 0.07 & 0.06 & 8.93 & 1.96 & 4.72 & 18.43  \\
munin2-pp & 167 & 0.03 & 0.11 & 3.81 & 0.80 & 3.37 & 30.21  \\
mulsol.i.5 & 186 & 0.23 & 0.09 & 37.88 & 0.13 & 0.27 & 8.80  \\
zeroin.i.2 & 211 & 0.16 & 0.06 & 18.70 & 0.09 & 0.11 & 5.85  \\
boblo & 221 & 0.01 & 0.29 & 3.39 & 0.28 & 0.56 & 46.22  \\
fpsol2.i-pp & 233 & 0.40 & 0.18 & 189.11 & 0.36 & 0.74 & 56.63  \\
munin4-wpp & 271 & 0.02 & 0.61 & 57.87 & 1.98 & 6.66 & 367.37  \\
\hline
\end{tabular}
\label{table:timeheuristic}
\end{table}

It is expected that the IUN heuristic and \textsc{LeastCutValue} heuristic give the same linear boolean-width, 
since both these heuristics greedily select the vertex that minimizes the boolean dimension.
The \textsc{RelativeNeighborhood} heuristic performs worse than all other heuristics in nearly all cases. 
While the difference might not seem very large, note that algorithms parameterized by boolean-width are exponential in the width of a decomposition.
The 2-IUN heuristic outperforms IUN in three cases while n-IUN gives a better decomposition in 11 out of 13 cases, 
which shows that a good initial vertex is of great influence on the width of the decomposition. 

Looking at the times displayed in Table~\ref{table:timeheuristic} for computing each decomposition 
we see that the \textsc{RelativeNeighborhood} heuristic is significantly faster.
This is to be expected because of the $O(n^3)$ time, compared to the exponential time for all other heuristics.
The interesting comparison that we can make is the difference between the IUN heuristic and \textsc{LeastCutValue} heuristic.
While both of these heuristics give the same decomposition, IUN is significantly faster. 
Additionally, even 2-IUN and n-IUN are often faster than the \textsc{LeastCutValue} heuristic.

\ifarchive{}
In Table~\ref{table:bigtable} (Appendix~\ref{appendix:tables}) we  show
linear boolean-width upperbounds that are obtained through using the IUN heuristic
with all starting vertices and $candidates = Right$.
We compare this with the best known tree-width and boolean-width values.
Examining these results, it seems that linear boolean-width seem to be more useful in practice
than boolean-width heuristics.
However, one should note that on certain graph classes, for instance graphs which look like
trees, boolean-width is a lot lower than linear boolean-width.
\fi

\subsection{Vertex subset experiments}

We have used the linear decompositions given by the n-IUN heuristic to compute the size of the 
maximum induced matching (MIM) in a selection of graphs, 
of which the results are presented in Table~\ref{table:sigmarhoexperiments}.
The maximum induced matching problem is defined as finding the largest $(\{1\}, \mathbb{N})$ set, with $d(\{1\}, \mathbb{N}) = 2$.
The choice for the MIM problem is arbitrary, any vertex subset problem with $d = 2$ 
will have the same number of equivalence classes and therefore they all require the same time when computing a solution.
We present the computed value of $nec_d(T, \delta)$, together with theoretical upperbounds.
For $d= 2$ a tight upperbound in terms of boolean-width is not known.
Note that we take the logarithm of each value, since we find this value easier to interpret and compare to other graph parameters.
We let $UB_1 = 2^{d \cdot \bw^2}$, $UB_2 = {(d+1)}^{\min ntc}$ and $UB_3 = ntc^{d \cdot \bw}$, 
with $ntc = \max\limits_{w \in T} ntc(\delta(w))$ and $\min ntc = \max\limits_{w \in T} \min(ntc(\delta(w)),ntc(\overline{\delta(w)}))$.

The column $MIM$ displays the size of the MIM in the graph, while the time column indicates the time needed to compute this set.
Missing values for $nec$ and MIM are caused by a lack of internal memory, because of the $O^*({nec_d(T, \delta)}^2)$ space requirement.
One can immediately see that there is a large gap between the upperbound for $nec_2$ in terms
of boolean-width and $nec_2$ itself.
Another interesting observation we can make by looking at the graphs zeroin.i.2 and boblo, 
is that a lower boolean-width does not imply a lower $nec_2$.
We even encountered this for decompositions of the same graph:
for the graph barley we observed $\bw(T, \delta) = 4.58$ and $\bw(T', \delta') = 4.81$,
while $\log_2(nec_2(T, \delta)) = 7.00$ and $\log_2(nec_2(T', \delta')) = 6.75$.
This suggests that this upperbound does not justify minimizing $nec_2$ through
boolean-width in practice.

\begin{table}[h!]
\centering
\caption{\small{Results of using the algorithm by Bui-Xuan et al.~\cite{fastdynamicprogramming} for solving $(\sigma, \rho)$ problems on graphs, 
using decompositions obtained through the n-IUN heuristic.}}
\begin{tabular}{|c|cc|ccc|cc|}
\hline
Graph & $\bw$ & $\log_2(nec)$ & $\log_2(UB_1)$ & $\log_2(UB_2)$  & $\log_2(UB_3)$  & $MIM$ & Time (s) \\
\hline
barley & 4.58 & 7.00 & 42.04 & 12.68 & 27.51 & 22 & 3  \\
pigs-pp & 6.64 & 10.31 & 88.28 & 19.02 & 49.17 & 22 & 1147  \\
david & 5.86 & 9.37 & 68.63 & 22.19 & 44.61 & 34 & 919  \\
celar04-pp & 7.27 & 11.15 & 105.61 & 28.53 & 65.74 & - & -  \\
1bkb-pp & 9.53 & - & 181.47 & 52.30 & 98.49 & - & -  \\
miles1500 & 5.29 & 9.30 & 55.87 & 34.87 & 49.69 & 8 & 4038  \\
celar10-pp & 6.91 & 10.34 & 95.41 & 25.36 & 59.70 & 50 & 10179  \\
munin2-pp & 7.61 & 11.82 & 115.97 & 19.02 & 54.60 & - & -  \\
mulsol.i.5 & 3.58 & 6.11 & 25.70 & 14.26 & 24.80 & 46 & 22  \\
zeroin.i.2 & 3.81 & 6.58 & 28.99 & 20.60 & 28.18 & 30 & 59  \\
boblo & 4.00 & 6.17 & 32.00 & 9.51 & 20.68 & 148 & 41  \\
fpsol2.i-pp & 4.81 & 8.07 & 46.22 & 22.19 & 36.61 & 46 & 934  \\
munin4-wpp & 7.61 & 12.13 & 115.97 & 19.02 & 57.98 & - & -  \\
\hline
\end{tabular}
\label{table:sigmarhoexperiments}
\end{table}

\section{Conclusion}
\label{section:conclusion}

We have presented a new heuristic and a new exact algorithm for finding linear boolean decompositions.
The heuristic has a running time that is several orders of magnitude lower than the previous best heuristic
and finds a decomposition in output sensitive time.
This means that if a decomposition is not found within reasonable time, then the decomposition that would have been generated
is not useful for practical algorithms.
Running the new heuristic once for every possible starting vertex results in significantly better decompositions compared to existing heuristics.

We have seen that if $\lbw(T, \delta) < \lbw(T', \delta')$, then there is no guarantee that $nec(T, \delta) < nec(T', \delta')$.
While in general it holds that minimizing boolean-width results in a low value of number of equivalence classes,
we think that it can be worthwhile to focus on minimizing the $nec_d$ instead of the boolean-width 
when solving vertex subset problems.
However, the number of equivalence classes is not symmetric, i.e., for a cut $\cut{A}$ $nec_d(A) \neq nec_d(\overline{A})$, 
which makes it harder to develop fast heuristics that focus on minimizing $nec_d$ since we need to keep track of both the equivalence classes of $A$ and $\overline{A}$.

Further research can be done in order to obtain even better heuristics and better upperbounds on both the linear 
boolean-width and boolean-width on graphs. 
For instance, combining properties of the \textsc{Incremental-UN-heuristic} and the \textsc{RelativeNeighborhood} heuristic 
might lead to better decompositions, as they make use of complementary features of a graph.
Another approach for obtaining good decompositions could be a branch and bound algorithm that makes us of trivial cases that are
used in the heuristics.
To decrease the time needed by the heuristics one can investigate reduction rules for linear boolean-width. 
While most reduction rules introduced by Sharmin~\cite{Practicalaspects} for boolean-width do not hold for linear boolean-width,
they can still be used on a graph after which we can use our heuristic on the reduced graph.
Although the resulting decomposition after reinserting the reduced vertices will not be linear, 
the asymptotic running time for applications does not increase~\cite{vhouten}. 
Another topic of research is to compare the performance of vertex subset algorithms parameterized by boolean-width 
to algorithms parameterized by treewidth~\cite{improvementconvolution}.

\bibliography{paper}{}
\bibliographystyle{plain}

\newpage
\appendix

\section{Omitted proofs}

\subsection{Proof of Theorem~\ref{th:pathwidth}}
\label{prf:pathwidth}

\begin{claim}
For any graph $G$ it holds that $\lbw(G) \leq \pw(G) + 1$.
\end{claim}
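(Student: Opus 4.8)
The plan is to mirror the standard proof that $\bw(G) \le treewidth(G) + 1$, specialised to the linear/path setting. I would start from an optimal path decomposition $B_1, \dots, B_r$ of $G$, so that $\max_j |B_j| = \pw(G) + 1$. From it I extract a linear ordering $\pi = \pi_1,\dots,\pi_n$ of $V$ by ordering the vertices according to the index of the first bag in which they occur, breaking ties within a bag arbitrarily; by the paper's definition this $\pi$ induces a linear decomposition $(T,\delta)$. The cuts induced by the internal nodes of $(T,\delta)$ are exactly the prefix cuts $\cut{A_i}$ with $A_i = \{\pi_1,\dots,\pi_i\}$ (the leaves giving only singletons, which have boolean dimension at most $1$). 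Since $\booldim$ is invariant under complementation (as $|\un(A)| = |\un(\overline{A})|$), it therefore suffices to show $\booldim(A_i) \le \pw(G)+1$ for every prefix, i.e. $|\un(A_i)| \le 2^{\pw(G)+1}$.

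The crux is to associate to each prefix cut $\cut{A_i}$ a single bag that separates its two sides. Let $j$ be the index of the first bag containing $\pi_i$. I would argue that $B_j$ separates $A_i \setminus B_j$ from $\overline{A_i}\setminus B_j$, meaning there is no edge between these two sets. The argument uses the interval (contiguity) property of path decompositions: any vertex $u \in A_i \setminus B_j$ has its first bag at index $\le j$ and is absent from $B_j$, so by contiguity it occurs only in bags with index $< j$; any vertex $w \in \overline{A_i}\setminus B_j$ comes after $\pi_i$ in $\pi$, hence has first bag at index $\ge j$, and being absent from $B_j$ it occurs only in bags with index $> j$. An edge $uw$ would have to lie in a common bag, which is impossible since the bag-intervals of $u$ and $w$ are disjoint. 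Making this case analysis fully rigorous — in particular handling the ties, i.e. the vertices sharing the same first bag as $\pi_i$ — is the step I expect to be the main obstacle.

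With the separating bag $B = B_j$ in hand, I would bound the number of distinct neighbourhoods across the cut. For any $X \subseteq A_i$, only the vertices of $X$ lying in $B$ can reach $\overline{A_i}\setminus B$, so $N(X) \cap (\overline{A_i}\setminus B) = N(X \cap B) \cap (\overline{A_i}\setminus B)$ is determined by $X \cap B$, while the remaining part $N(X) \cap (\overline{A_i}\cap B)$ is simply a subset of $\overline{A_i}\cap B$. Hence each element of $\un(A_i)$ is determined by a pair consisting of a subset of $A_i \cap B$ and a subset of $\overline{A_i}\cap B$, giving $|\un(A_i)| \le 2^{|A_i\cap B|}\cdot 2^{|\overline{A_i}\cap B|} = 2^{|B|} \le 2^{\pw(G)+1}$. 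Therefore $\booldim(A_i) \le \pw(G)+1$ for all $i$, so $\bw(T,\delta) \le \pw(G)+1$ and consequently $\lbw(G) \le \pw(G)+1$.
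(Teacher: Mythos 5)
Your proof is correct, and the key counting step is genuinely different from (and arguably cleaner than) the paper's. Both arguments build the ordering bag by bag from a path decomposition — your ``order by first bag of occurrence'' produces the same family of orderings as the paper's construction. The paper then bounds $|\un(\chi_i)|$ at bag boundaries via the border set $S_i = \{u \in \chi_i : N(u)\cap\overline{\chi_i}\neq\emptyset\}$, shows $S_i \subseteq X_i \cap X_{i+1}$, and handles the cuts arising mid-bag by a separate doubling argument (each vertex of $X_{i+1}\setminus S_i$ at most doubles the count). You instead exhibit, for \emph{every} prefix cut $\cut{A_i}$, a single separating bag $B_j$ and observe that each element of $\un(A_i)$ is determined by a pair of subsets of $A_i\cap B_j$ and $\overline{A_i}\cap B_j$, giving $2^{|B_j|}$ uniformly with no case split between boundary and intermediate cuts. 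This buys a more rigorous treatment of the intermediate cuts, where the paper's ``appending $v\in S_i$ does not increase the boolean dimension'' step is stated rather loosely. One remark: the obstacle you anticipate with ties is not actually there — any vertex whose first bag is also $B_j$ belongs to $B_j$ and is therefore excluded from both $A_i\setminus B_j$ and $\overline{A_i}\setminus B_j$, so the interval argument closes without further work (a vertex of $A_i\setminus B_j$ has all its bags at index $<j$, a vertex of $\overline{A_i}\setminus B_j$ has all its bags at index $>j$, and the separation follows).
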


\begin{proof}
We give a method of construction that 
gives us a linear boolean decomposition of a graph $G$ from a path decomposition of $G$.
Recall that a linear boolean decomposition can be defined through 
a linear ordering $\pi = \pi_1,\dots,\pi_n$ of $V$.
The idea is that given a path decomposition $X_1, \dots, X_n$ we select vertices one by one from a 
subset $X_i$ and append them to the linear ordering $\pi$, after which we move on to $X_{i + 1}$.
For shorthand notation we denote $\chi_i = \bigcup\limits_{j = 1}^i X_i$.

Let $S_i = \set{u}{u \in \chi_i: N(u) \cap \overline{\chi_i} \neq \emptyset}$.
For each $u \in S_i$ it holds that $\exists j > i\ \exists w \in X_{j}$ for which $\{u, w\} \in E$.
By definition of a path decomposition we know that there is a subset $X_{j}$ with $u, w \in X_j$, 
and since all subsets containing a certain vertex are subsequent in the path decomposition,
it follows that $u \in X_i$ and $u \in X_{i + 1}$, implying that $S_i \subseteq X_i$ and $S_i \subseteq X_{i + 1}$.
By definition, the unions of neighborhoods of $\chi_i$ can only consist of neighborhoods of subsets of $S_i$, 
thus it follows that $|\un(\chi_i)| = 2^{\booldim(\chi_i)} \leq 2^{|S_i|} \leq 2^{|X_i|} \leq 2^{\pw(G) + 1}$.
What remains to be shown is that while appending vertices one by one from a subset $X_{i + 1}$, 
the number of unions of neighborhoods will not exceed $2^{|X_{i + 1}|}$ at any point.
For each vertex $v \in X_{i + 1}$ there are two possibilities.
If $v \in S_i$, then appending $v$ to the linear ordering will not increase the boolean dimension, 
since $v$'s neighborhood was already an element of the unions of neighborhoods constructed so far.
If $v \notin S_i$, then it is possible that $v$ will contribute a new neighborhood to the unions of neighborhoods,
which will cause factor 2 increase in the worst case. 
There are at most $|X_{i + 1} \setminus S_i|$ such vertices, and because 
$S_i \subseteq X_{i + 1}$, it follows that $|X_{i + 1} \setminus S_i| = |X_{i + 1}| -  |S_i|$.
We conclude that at any point during construction it holds that
\[
\un(\chi_{i + 1}) = 2^{\booldim(\chi_{i + 1})} \leq 2^{|S_i|} \cdot 2^{|X_{i + 1}| -  |S_i|} = 2^{|X_{i + 1}|} \leq 2^{\pw(G) + 1}
\]
\end{proof}

\subsection{Adaptation of existing exact algorithms}
\label{subs:existing_exact_algorithms}
Straighforward dynamic programming leads to the following result.
\begin{theorem}
\label{thm:lbw_dp}
    A linear boolean decomposition of minimum boolean-width can be computed in $O(2.7284^n)$ time
    using $O(n \cdot 2^n)$ space.
\end{theorem}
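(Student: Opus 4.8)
The plan is to solve the recurrence in Equation~\ref{eq:lboolw} by dynamic programming over subsets, and then bound the running time by observing that the expensive part---evaluating $|\un(A)|$ for every $A \subseteq V$---can be charged against the number of neighborhoods, which is itself bounded by a function whose worst case over all subset sizes yields the claimed base $2.7284$.

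The plan is to solve the recurrence in Equation~\ref{eq:lboolw} by bottom-up dynamic programming over all vertex subsets. I would keep a table $P(A)$ indexed by $A \subseteq V$, initialise the base cases of Equation~\ref{eq:lboolw}, and fill the table in order of increasing $|A|$; to handle a set $A$ I combine, for every $v \in A$, the already-computed value $P(A \setminus \{v\})$ with $|\un(A)|$ exactly as the recurrence prescribes, spending only $O(n)$ time per pair once $|\un(A)|$ is available, and I store an argmin pointer so that the optimal ordering can be recovered by back-tracking from $V$. The linear boolean-width is then $\log_2 P(V)$. Correctness is a routine induction on $|A|$: $P(A)$ equals the minimum, over all linear orderings of $A$, of the largest value $|\un(\cdot)|$ seen while inserting the vertices of $A$ one by one, and the outer minimisation over $v \in A$ corresponds to guessing the last inserted vertex; hence $P(V) = 2^{\lbw(G)}$ and the recovered ordering realises it.

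The running time is dominated not by the $O(n \cdot 2^n)$ table updates but by producing the values $|\un(A)|$ for all $2^n$ subsets. Here I would use the identity $|\un(A)| = \#\mathcal{MIS}(G[A, \overline{A}])$ recalled in the preliminaries, computing each value by enumerating the maximal independent sets of the bipartite graph $G[A, \overline{A}]$ with a polynomial-delay algorithm and storing the distinct neighborhoods in a hashmap. This costs $O^*(\#\mathcal{MIS}(G[A, \overline{A}]))$ per cut and is output sensitive, so the total time is $O^*\big(\sum_{A \subseteq V} \#\mathcal{MIS}(G[A, \overline{A}])\big)$ and the whole estimate reduces to bounding this sum.

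The crux, and the step I expect to be the main obstacle, is to prove $\sum_{A \subseteq V} \#\mathcal{MIS}(G[A, \overline{A}]) = O^*(2.7284^n)$; this is exactly what improves on the easy bound $|\un(A)| \le 2^{\min(|A|,\, n-|A|)}$, which only gives $\sum_{A} 2^{\min(|A|,\, n-|A|)} = O^*(2^{3n/2}) = O^*(2.8284^n)$. I would first note that the sum is multiplicative over the connected components of $G$, so it suffices to bound a per-vertex growth rate over all graphs. I would then reinterpret the sum as the number of four-colourings of $V$ --- the colours encoding ``in $A$ and in the chosen independent set'', ``in $A$ but outside it'', and the two symmetric cases inside $\overline{A}$ --- subject to the local constraints forcing independence across the cut and maximality, and bound the growth rate of these constrained colourings by a transfer-matrix or measure argument applied to the extremal configuration. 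Showing that this growth rate is genuinely at most $2.7284$, rather than merely reducing the question to such an extremal problem, is the delicate part; alternatively one may cite the identical counting bound underlying the known exact algorithm for (non-linear) boolean-width.

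For space, the table $P$ stores a single number per subset, needing $O(n \cdot 2^n)$ space, whereas the maximal-independent-set enumeration for one cut needs only $O(n \cdot \#\mathcal{MIS}(G[A, \overline{A}])) = O(n \cdot 2^{n/2})$ space and is reused across subsets. The table therefore dominates, giving the claimed $O(n \cdot 2^n)$ space bound.
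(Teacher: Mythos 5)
Your overall architecture (precompute $|\un(A)|$ for all $A \subseteq V$, then solve recurrence~\eqref{eq:lboolw} bottom-up in $O(n \cdot 2^n)$ time with $O(n \cdot 2^n)$ space) matches the paper, and that part is fine. But there is a genuine gap at exactly the step you flag as the crux. You propose to obtain each $|\un(A)| = \#\mathcal{MIS}(G[A,\overline{A}])$ by polynomial-delay \emph{enumeration}, which makes the total cost $O^*\bigl(\sum_{A \subseteq V} \#\mathcal{MIS}(G[A,\overline{A}])\bigr)$, and you then need the unproven combinatorial claim that this sum is $O^*(2.7284^n)$. You do not prove it; you sketch a transfer-matrix/measure argument without carrying it out, and your fallback --- citing ``the identical counting bound underlying the known exact algorithm for boolean-width'' --- does not exist in the form you need (the known exact algorithm runs in $O^*(2^{n+K})$ and rests on no such summation bound). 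Since the naive per-cut bound $\#\mathcal{MIS}(G[A,\overline{A}]) \leq 2^{n/2}$ only yields $O^*(2.8284^n)$, the entire improvement to $2.7284^n$ hinges on this missing lemma, so the proof is incomplete as written.

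The paper sidesteps this entirely: it uses the algorithm of Gaspers et al.\ that \emph{counts} maximal independent sets in $O(1.3642^n)$ worst-case time, rather than enumerating them. Each of the $2^n$ cuts then costs $O(1.3642^n)$ regardless of how many maximal independent sets it actually has, and the total is $2^n \cdot 1.3642^n = 2.7284^n$ --- note that the constant in the theorem is literally $2 \times 1.3642$, which is the tell that the intended argument is a worst-case product, not an amortized output-sensitive sum. If you want to salvage your route, you would need to actually establish the summation bound, which is a nontrivial extremal problem in its own right; switching to the counting algorithm gives the theorem immediately.
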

\begin{proof}
    As a preprocessing step we compute for all cuts $A \subseteq V$ the values $|\un(A)|$ by computing $\#\mathcal{MIS}(G[A, \overline{A}])$.
    Computing $\#\mathcal{MIS}$ for any graph can be done in $O(1.3642^n)$ time~\cite{gaspers}.
    Doing this for all $A$ takes $O(2.7284^n)$ time.

    We solve recurrence relation~\eqref{eq:lboolw} in a bottom-up fashion.
    For each iteration, the minimum of $|A|$ numbers has to be taken.
    Suppose $|A| = k$, then this takes $O(k)$ time for each iteration.
    When solving the recurrence relation, $|A|$ goes from $1$ to $n$.
    Since there are $n \choose k$ subsets of size $k$, it takes $ \sum_{k=1}^{n} {n \choose k} k = O(n \cdot 2^{n - 1}) = O(n \cdot 2^n)$
    time to compute all values for $\lbw$.

    Because the preprocessing step of computing $\booldim$ is the bottleneck, the total time is $O(2.7284^n)$.
    The space requirements amount to $O(n \cdot 2^n)$, since $\booldim$ and $\lbw$ contain at most $2^n$ entries of integers of at most  $n$ bits.
\end{proof}

The currently fastest known exact algorithm for boolean-width runs in $O^*(2^{n+K})$~\cite{vatshelle},
where $K$ is a known upperbound for the boolean-width of the current graph.
By performing a binary search on $K$, we can achieve an output sensitive asymptotic running time.
Theorem~\ref{thm:lbw_vatshelle} is a direct adaptation to linear boolean-width.
\begin{theorem}
\label{thm:lbw_vatshelle}
    A linear boolean decomposition of minimum boolean-width for a graph $G$ can be computed in $O(n^3 \cdot 2^{n + \lbw(G)})$ time using $O(n \cdot 2^n)$ space.
\end{theorem}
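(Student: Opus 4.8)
The plan is to take Vatshelle's $O^*(2^{n+K})$-time exact algorithm for general boolean-width---which, given a width upperbound $K$ (so that $|\un(A)| \le 2^K$ for every admissible cut), decides whether $\bw(G) \le K$ and, if so, returns a witnessing decomposition---and restrict its dynamic program to linear decompositions by replacing the tree-merge recurrence with recurrence~\eqref{eq:lboolw}. Concretely, I would first run a precomputation that visits subsets $A \subseteq V$ in increasing order of cardinality and, for each examined $A$ with $\booldim(A) \le K$, stores the table value $|\un(A)|$. The crucial pruning, inherited from Vatshelle, is that a subset is expanded (by adjoining a single vertex) only when it already satisfies $\booldim(A) \le K$: any linear ordering whose prefix induces a cut of dimension exceeding $K$ cannot produce a decomposition of width at most $K$, so such cuts may be discarded. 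This confines the set of examined cuts to those reachable by a width-$\le K$ prefix.

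Second, I would solve recurrence~\eqref{eq:lboolw} bottom-up over the stored values: for each examined $A$ of size $i+1$ and each $v \in A$, update $P(A) \gets \min(P(A), \max(|\un(A)|, P(A \setminus \{v\})))$, so that $P(V)$ is the least achievable maximum of $|\un|$ over linear orderings and $\log_2 P(V) = \lbw(G)$ whenever $\lbw(G) \le K$. The running time is dominated by the precomputation: there are at most $2^n$ relevant cuts, each expanded over up to $n$ vertices, and in the unimproved procedure (i.e.\ before the speedup embodied in Algorithm~\ref{increment-un}) maintaining and representing the at most $2^K$ neighborhoods of a cut carries a polynomial overhead, which multiplies out to $O(n^3 \cdot 2^{n+K})$. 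The DP step itself contributes only $O(n \cdot 2^n)$ and is absorbed, and both the $|\un(\cdot)|$ table and $P(\cdot)$ hold at most $2^n$ integers of $O(n)$ bits, giving $O(n \cdot 2^n)$ space.

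Finally, to make the running time depend on the true optimum rather than on a fixed bound, I would invoke the search on $K$ announced above, trying successively larger candidate bounds until the procedure succeeds. Because the per-$K$ cost is proportional to $2^{n+K}$, the total over all tested bounds is a geometric series dominated by its largest term, and the largest bound for which the search runs is $O(\lbw(G))$; the overall time therefore collapses to $O(n^3 \cdot 2^{n+\lbw(G)})$ while the space stays $O(n \cdot 2^n)$. I expect the main obstacle to be two pieces of bookkeeping: first, arguing that restricting Vatshelle's search to recurrence~\eqref{eq:lboolw} together with the $\booldim(A) \le K$ pruning never discards a cut that an optimal width-$\le K$ linear ordering must traverse (correctness of the linear restriction); and second, pinning down that the unimproved per-cut overhead is exactly the claimed $O(n^3)$ factor and that the geometric-series argument indeed places $\lbw(G)$, rather than some worst-case $K$, in the exponent.
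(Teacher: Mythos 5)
Your overall architecture matches the paper's: precompute a capped value of $|\un(A)|$ for the relevant cuts, solve recurrence~\eqref{eq:lboolw} bottom-up in $O(n \cdot 2^n)$ time, and recover output sensitivity by trying geometrically increasing bounds $K$ so that the largest term dominates. The search-on-$K$ step and the DP step are fine and match the paper.

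The gap is in the one step you yourself flag as an obstacle: you never exhibit a procedure that computes $\min(|\un(A)|, 2^K)$ for a single cut in $O(n^3 \cdot 2^K)$ time. Saying that ``maintaining and representing the at most $2^K$ neighborhoods of a cut carries a polynomial overhead'' is not an algorithm --- if you maintain the neighborhood family incrementally along a prefix you are reinventing \textsc{Increment-UN} and land at the $O(n \cdot 2^{n+\lbw(G)})$ bound of Theorem~\ref{incremental-un-exact-correctness} instead, and if you do not, you have not said how the family is produced. The paper's proof supplies the missing concrete ingredient: since $|\un(A)| = \#\mathcal{MIS}(G[A,\overline{A}])$, one runs a polynomial-delay enumeration algorithm for maximal independent sets of the bipartite graph $G[A,\overline{A}]$, with $O(n^3)$ time between consecutive outputs, and truncates it after $2^K$ outputs; this gives $O(n^3 \cdot 2^K)$ per cut, applied to all $2^n$ cuts with no reachability pruning needed. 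Your prefix-pruning idea (expand a subset only if its cut has dimension at most $K$) is sound for correctness but is neither necessary for this theorem nor sufficient to justify the $n^3$ factor; without the $\#\mathcal{MIS}$ enumeration, the per-cut cost remains unsubstantiated and the claimed running time does not follow.
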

\begin{proof}
    As a preprocessing step we compute for all cuts $A \subseteq V$ the values $|\un(A)|$, using a polynomial time delay algorithm,
    which lists maximal independent sets in $G[A, \overline{A}]$ with at most $O(n^3)$ time in between two results~\cite{dias}.
    We can use the upperbound $K$ as a limit for this algorithm, such that computing $\max(|\un(A)|, K)$ takes at most $O(n^3 \cdot K)$ time.

    Now consider relation~\eqref{eq:lboolw}. This can be solved in $O(n \cdot 2^n)$ time by the same reasoning as in Theorem~\ref{thm:lbw_dp}.
    This results in a total running time of $O(n^3 \cdot 2^{n + \lbw(G)})$ by binary search on $K$.
    The space requirements amount to $O(n \cdot 2^n)$, since the tables $\booldim$ and $\lbw$ contain at most $2^n$ entries of integers of at most $n$ bits.
\end{proof}

\subsection{Proof of Lemma~\ref{increment-un-correctness}}
\label{prf:increment-un-correctness}

\begin{claim}
The procedure Increment-UN is correct and runs in $O(n \cdot |\un_X|)$ time using $O(n \cdot |\un_X|)$ space.
\end{claim}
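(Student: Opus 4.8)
The plan is to prove correctness by directly characterizing the target set $\un(X \cup \set{v})$ and matching it term-by-term to the two assignments in the loop body, and then to count the operations for the time and space bounds.

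First I would write $X' = X \cup \set{v}$ and record the basic identity $\overline{X'} = \overline{X} \setminus \set{v}$. Every subset $Y' \subseteq X'$ either excludes or includes $v$, so it is either of the form $Y' = Y \subseteq X$ or $Y' = Y \cup \set{v}$ with $Y \subseteq X$. I would then evaluate the neighborhood across the new cut in each case. For $Y' = Y$ one gets $N(Y) \cap \overline{X'} = (N(Y) \cap \overline{X}) \setminus \set{v}$, and for $Y' = Y \cup \set{v}$, expanding $N(Y \cup \set{v}) = N(Y) \cup N(v)$ and intersecting with $\overline{X'}$ yields $((N(Y) \cap \overline{X}) \setminus \set{v}) \cup (N(v) \cap (\overline{X} \setminus \set{v}))$, using $A \cap (B \setminus \set{v}) = (A \cap B) \setminus \set{v}$. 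Setting $S = N(Y) \cap \overline{X}$, these are precisely the two expressions $S \setminus \set{v}$ and $(S \setminus \set{v}) \cup (N(v) \cap (\overline{X} \setminus \set{v}))$ inserted into $\mathcal{U}$ in the loop.

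The key point, and the step needing the most care, is that both resulting expressions depend only on $S = N(Y) \cap \overline{X}$ together with the fixed data $v$ and $N(v)$, and not on the particular $Y$ realizing a given $S$. Since $\un_X$ stores exactly the distinct values assumed by $N(Y) \cap \overline{X}$ as $Y$ ranges over all subsets of $X$, iterating once over each $S \in \un_X$ suffices to produce every element of $\un(X')$ (completeness), while conversely each produced element equals $N(Y') \cap \overline{X'}$ for an explicit $Y' \subseteq X'$ and hence lies in $\un(X')$ (soundness). Combining the two inclusions gives $\mathcal{U} = \un(X')$, which is the correctness claim.

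For the complexity, I would observe that the hashmap $\un_X$ is traversed exactly once, so the loop executes $|\un_X|$ times; each iteration performs a constant number of set operations and two insertions into the hashmap $\mathcal{U}$, every one of which is a bitset operation or a bitset hash costing $O(n)$, yielding $O(n \cdot |\un_X|)$ time. The returned set $\mathcal{U}$ contains at most $2|\un_X|$ distinct neighborhoods, each stored as an $n$-bit set, for $O(n \cdot |\un_X|)$ space; as a byproduct this confirms that adding a single vertex at most doubles the number of unions of neighborhoods, i.e.\ $\booldim$ increases by at most one.
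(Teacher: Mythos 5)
Your proposal is correct and follows essentially the same route as the paper's proof: both establish soundness by checking that each inserted set is a genuine neighborhood across the new cut, and completeness via the same case split on whether the witnessing subset contains $v$ (the paper phrases this as a proof by contradiction, you phrase it constructively), with identical counting arguments for the time and space bounds. Your explicit remark that the two produced expressions depend only on $S$ and not on the particular $Y$ realizing it is a point the paper leaves implicit, and is a worthwhile clarification.
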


\begin{proof}
For proof by induction, assume that all unions of neighborhoods for the cut $(X, \overline{X})$ saved inside the set $\un_X$ are computed correctly.
For each neighborhood in $\un_X$ we only perform two actions to obtain new neighborhoods.
The first action is removing $v$, since $v$ cannot be in any neighborhood of $X \cup \set{v}$.
The second operation is adding $N(v)$ to an existing neighborhood, which also results in a valid new neighborhood across the cut.
It is clear that if a neighborhood is added to $\mathcal{U}$, then it is a valid neighborhood across the cut $(X \cup \{v\}, \overline{X} \setminus \{v\})$.
We now show that all valid neighborhoods of the cut $(X \cup \{v\}, \overline{X} \setminus \{v\})$ are contained in $\mathcal{U}$.
Assume for contradiction that $S$ is a valid neighborhood not contained in $\mathcal{U}$.
By definition, there is a set $R$ for which $N(R) \cap (\overline{X} \setminus \{v\}) = S$.
If $v \notin R$, then $N(R) \cap \overline{X} \in \un_X$, meaning that we add $N(R) \cap (\overline{X} \setminus \{v\})$ to $\mathcal{U}$, contradicting our assumption.
If $v \in R$, then $N(R \setminus \{v\}) \cap \overline{X} \in \un_X$.
During the algorithm we construct $(N(R \setminus \{v\}) \cup N(v)) \cap (\overline{X} \setminus \{v\})$, which is equal to $N(R) \cap (\overline{X} \setminus \{v\})$.
This means that $N(R) \cap (\overline{X} \setminus \{v\})$ is added to $\mathcal{U}$, also contradicting our assumption.
It follows that a neighborhood is contained in the set $\mathcal{U}$ if and only if it is a valid neighborhood across the cut $(X \cup \{v\}, \overline{X} \setminus \{v\})$.

The time is determined by the number of sets $S$ saved in $\un_X$.
The number of unions of neighborhoods that we iterate over does not exceed $|\un_X|$.
The set operations that are performed for each $S$ take at most $O(n)$ time.
This results in the total time for this algorithm to be $O(n \cdot |\un_X|)$.
The space requirements amount to $O(n \cdot |\un_X|)$, for storing $\mathcal{U}$ which contains at most $O(|\un_X|)$ sets of size at most $O(n)$.
\end{proof}

\subsection{Proof of Theorem~\ref{incremental-un-exact-correctness}}
\label{prf:incremental-un-exact-correctness}

\begin{claim}
    Given a graph $G$, Algorithm~\ref{alg:incremental-un-exact} can be used to compute $\lbw(G)$
    in $O(n \cdot 2^{n+\lbw(G)})$ time using $O(n \cdot 2^n)$ space.
\end{claim}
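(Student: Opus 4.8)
The plan is to establish three things: (i) the forward dynamic program in \textsc{Incremental-UN-exact} correctly evaluates recurrence~\eqref{eq:lboolw} on every subset whose optimal value does not exceed $K$, (ii) the pruning performed by \textsc{Compute-count-UN} and by the $P(X)\le K$ guard never discards information needed for an optimal ordering, and (iii) the stated time and space bounds. Combining (i) and (ii) shows that with $K = 2^{\lbw(G)}$ the procedure returns $\log_2 P(V) = \lbw(G)$; running it with geometrically increasing $K$ (exactly as in the proof of Theorem~\ref{thm:lbw_vatshelle}) then yields $\lbw(G)$ while keeping the running time output sensitive.

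First I would analyse \textsc{Compute-count-UN}. By Lemma~\ref{increment-un-correctness} each call to \textsc{Increment-UN} correctly turns $\un_X$ into $\un_{X\cup\{v\}}$, so whenever $T_{\un}(Y)$ is assigned it equals $|\un(Y)|$. I would then characterise which entries get computed: by induction along the recursion, \textsc{Compute-count-UN} is invoked on exactly those $X$ reachable from $\emptyset$ by a chain $\emptyset = Z_0 \subsetneq \cdots \subsetneq Z_m = X$ that adds one vertex at a time with $|\un(Z_j)| \le K$ for all $j$; consequently $T_{\un}(Y)$ is defined for every child $Y = X\cup\{v\}$ of such a reachable $X$. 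The crucial monotonicity observation is that $|\un(X)| \le P(X)$ for every $X$, since $X$ is itself the final prefix of any ordering of $X$; hence $P(X)\le K$ forces $|\un(X)|\le K$, and an ordering of $X$ attaining $P(X)\le K$ has all of its prefixes of $\un$-size at most $K$, so $X$ is reachable and $T_{\un}(X\cup\{v\})$ is guaranteed to be available whenever the guard $P(X)\le K$ fires.

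Next I would prove by induction on $|Y|$ that the DP loop computes $P(Y)$ exactly whenever $P(Y)\le K$. The recurrence value is attained by some $v^\ast$ with $P(Y) = \max\{|\un(Y)|, P(Y\setminus\{v^\ast\})\}$; since $P(Y)\le K$ both terms are $\le K$, the inductive hypothesis gives the correct $P(Y\setminus\{v^\ast\})$, the guard at $X = Y\setminus\{v^\ast\}$ passes, and $T_{\un}(Y)=|\un(Y)|$ is available, so the update realises the recurrence value; conversely every update corresponds to a genuine ordering, so $P$ is never underestimated. This is the step I expect to be the main obstacle, since it is where the two independent pruning mechanisms — the $\le K$ cutoff in \textsc{Compute-count-UN} and the $P(X)\le K$ guard in the DP — must be shown to be simultaneously safe, and the inequality $|\un(X)|\le P(X)$ is the key that ties them together.

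Finally, for the complexity I would bound the work. Each subset $Y$ has its $T_{\un}(Y)$ computed at most once (the ``not defined'' test), so there are at most $2^n$ calls to \textsc{Increment-UN}, each from a reachable parent $X$ with $|\un_X| \le K = 2^{\lbw(G)}$ and hence costing $O(n\cdot 2^{\lbw(G)})$ by Lemma~\ref{increment-un-correctness}; this dominates, giving total $O(n\cdot 2^{\,n+\lbw(G)})$, while the DP loop touches only $O(n\cdot 2^n)$ subset--vertex pairs, a lower-order term. For space, the tables $T_{\un}$ and $P$ hold at most $2^n$ entries of $O(n)$ bits each, contributing $O(n\cdot 2^n)$, and the recursion stack of \textsc{Compute-count-UN} stores at most $n$ sets of size $\le K$ consisting of $O(n)$-bit vectors, i.e.\ $O(n^2\cdot 2^{\lbw(G)}) = O(n\cdot 2^n)$, so the overall space is $O(n\cdot 2^n)$.
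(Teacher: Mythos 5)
Your proposal is correct and follows essentially the same route as the paper: prune the enumeration of $\un$-tables by the threshold $K$, argue via single-vertex prefix chains that every $T_{\un}$ entry needed by the bottom-up DP is in fact computed, and recover output sensitivity by geometrically increasing $K$. Your explicit use of the monotonicity $|\un(X)|\le P(X)$ to link the two pruning mechanisms, and the induction for the DP loop, just spell out details the paper's proof of Lemma~\ref{lemma:incremental-un-exact-correctness} treats more tersely (there via a contradiction on prefixes of an optimal ordering).
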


\begin{proof}
    Iteratively double $K$ in Algorithm~\ref{alg:incremental-un-exact}, starting with $K=1$, until it returns a number that is not $\infty$.
    By Lemma~\ref{lemma:incremental-un-exact-correctness} this will take $O(\sum_{\log K=1}^{\lbw(G)} n \cdot 2^{n+\log K}) = O(n \cdot 2^{n + \lbw(G) + 1}) = O(n \cdot 2^{n+\lbw(G)})$
    and take $O(n \cdot 2^n)$ space.
\end{proof}

\begin{lemma}
	\label{lemma:incremental-un-exact-correctness}
    Given a graph $G = (V,E)$ of size $n$ and an integer $K$, Algorithm~\ref{alg:incremental-un-exact}
    computes the linear boolean width, if it is at most $\log K$, in $O(n \cdot K \cdot 2^n)$ time
    using $O(n \cdot 2^n)$ space.
\end{lemma}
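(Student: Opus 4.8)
The plan is to establish correctness and the resource bounds separately, handling the two phases of Algorithm~\ref{alg:incremental-un-exact} one at a time: the recursive precomputation of the table $T_{\un}$ by \textsc{Compute-count-UN}, and then the forward dynamic program that solves recurrence~\eqref{eq:lboolw}. Throughout I write $P^{*}$ for the exact solution of~\eqref{eq:lboolw} (so that $\lbw(G) = \log_2 P^{*}(V)$) and reserve $P$ for the values the algorithm actually stores, treating an undefined table entry as $\infty$. The one auxiliary fact I would record first is the monotonicity inequality $|\un(B)| \le P^{*}(B)$ for every nonempty $B$, which holds because $B$ is itself a prefix of any ordering of $B$ and therefore contributes its own term to the maximum in~\eqref{eq:lboolw}; this is used repeatedly.

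For the precomputation phase I would prove, by induction on $|X| \ge 1$, the invariant: if $P^{*}(X) \le K$ then \textsc{Compute-count-UN} defines $T_{\un}(X) = |\un(X)|$ and, since $|\un(X)| \le P^{*}(X) \le K$, afterwards recurses into $X$. The base case $|X| = 1$ holds because the top-level call is made on $\emptyset$ with $\un_{\emptyset} = \{\emptyset\}$, so every singleton is examined. For the step, let $v^{*}$ attain the minimum in~\eqref{eq:lboolw} for $X$; then $P^{*}(X) \le K$ forces $P^{*}(X \setminus \{v^{*}\}) \le K$, the induction hypothesis makes \textsc{Compute-count-UN} recurse into $X \setminus \{v^{*}\}$, and one iteration of that recursive call defines $T_{\un}(X)$ via \textsc{Increment-UN}, whose correctness is Lemma~\ref{increment-un-correctness}. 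A byproduct I would extract here, needed later, is that whenever a stored $P(X)$ is finite the entry $T_{\un}(X)$ is already defined, so the guarded main loop never reads an undefined table entry.

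For correctness of the dynamic program I would show, by induction on $|Y|$, that $P(Y) = P^{*}(Y)$ for every $Y$ with $P^{*}(Y) \le K$, maintaining in parallel the global lower bound $P(Y) \ge P^{*}(Y)$, which follows because each update $P(Y) \gets \max(T_{\un}(Y), P(X))$ uses $T_{\un}(Y) = |\un(Y)|$ together with $P(X) \ge P^{*}(X)$. The key point is that pruning never discards the relevant candidate: if $P^{*}(Y) \le K$ and $v^{*}$ is optimal for $Y$, then $P^{*}(Y \setminus \{v^{*}\}) \le K$, so by the induction hypothesis $P(Y \setminus \{v^{*}\}) = P^{*}(Y \setminus \{v^{*}\}) \le K$, the guard passes, and this update realises the value $P^{*}(Y)$. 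Combined with the lower bound this gives equality. Applying the statement to $Y = V$ under the hypothesis $\lbw(G) \le \log_2 K$ (equivalently $P^{*}(V) \le K$) shows the algorithm returns $\log_2 P^{*}(V) = \lbw(G)$.

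The resource analysis is then routine. For time, the dominant cost is \textsc{Increment-UN}: each of the at most $2^n$ subsets $Y$ for which $T_{\un}(Y)$ becomes defined is produced by exactly one \textsc{Increment-UN} call from an already-expanded parent $X$, and an expanded node satisfies $|\un_X| \le K$, so that call runs in $O(n \cdot |\un_X|) = O(nK)$ time by Lemma~\ref{increment-un-correctness}; summing yields $O(nK2^n)$, while the forward loop contributes only $O(n \cdot 2^n)$ update steps and is dominated. For space, the tables $T_{\un}$ and $P$ store at most $2^n$ integers of $O(n)$ bits, giving $O(n \cdot 2^n)$, and the $\un$-sets held on the depth-$\le n$ recursion stack occupy only $O(n^2 K)$ bits, which is lower order in the intended regime $K \le 2^{\lbw(G)}$. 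I expect the main obstacle to be bookkeeping the precomputation invariant precisely --- in particular guaranteeing that $T_{\un}(Y)$ is defined exactly when the pruned main loop needs it --- rather than the counting, which is standard.
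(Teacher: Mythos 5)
Your proposal is correct and follows essentially the same route as the paper's proof: the same prefix-chain argument (yours by induction on $|X|$, the paper's by contradiction) showing that pruning at $K$ never cuts off a set occurring in a width-$\le\log K$ ordering, followed by the same once-per-subset accounting of the \textsc{Increment-UN} calls at cost $O(n\cdot|\un_X|)=O(nK)$ each. Your added bookkeeping (the invariant $P\ge P^{*}$ and the explicit check that the guarded main loop only reads defined entries of $T_{\un}$) is a welcome tightening but not a different argument.
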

\begin{proof}
    Consider the first part of procedure \textsc{Incremental-UN-exact}, where the call to the procedure \textsc{Compute-count-UN} is made.
    It may not be immediately clear that $T_{\un}$ is always computed when necessary, since there
    may be $X$ such that $T_{\un}(X)$ is not computed, while $T_{\un}(X) \leq K$.
    Suppose that $X \subseteq V$ of size $i$ occurs in an optimal decomposition and $T_{\un}(X)$ has not been computed.
    Since we are dealing with linear decompositions, there exists an ordering $v_1,\dots,v_i$
    of $X$ such that for all $1 \leq j \leq i$, the set $X_j = \bigcup_{0 \leq j' \leq j} v_{j'}$ also occurs in the optimal decomposition.
    Obviously this implies that $T_{\un}(X_j) \leq K$ for all $j$.
    But this means that for all these $X_j$ the if-statement on line~\ref{booldim-pruning2}
    evaluates to true. But that means that $T_{\un}(X)$ must be computed, contradiction.
    Thus we conclude that $T_{\un}$ is computed correctly throughout the algorithm.
    The second part of procedure \textsc{Incremental-UN-exact} simply solves the recurrence in a bottom-up dynamic programming fashion.
    Finally, the procedure \textsc{Increment-UN} is correct by Lemma~\ref{increment-un-correctness}.

    We now analyze the running time.
    Consider the procedure \textsc{Compute-count-UN}.
    We observe that the procedure can only be called once for each $X \subseteq V$,
    because as soon as the call is made, $T_{\un}(X)$ will be defined
    and line~\ref{booldim-pruning1} prevents further calls with equal $X$.
    At every call the for-loop has to make at most $n$ iterations,
    thus we obtain $O(n \cdot 2^n)$ iterations in total.
    If line~\ref{booldim-pruning1} evaluates false, the body of the for-loop takes constant time.
    If line~\ref{booldim-pruning1} evaluates true, the call to \textsc{Increment-UN} takes $O(n \cdot 2^K)$ time (by Lemma~\ref{increment-un-correctness}),
    as $|\un_X| \leq K$ (otherwise by line~\ref{booldim-pruning2} the call to \textsc{Compute-count-UN} would not have been made).
    Because line~\ref{booldim-pruning1} only returns true at most $O(2^n)$ times,
    the time of \textsc{Compute-count-UN} amounts to $O(n \cdot 2^{n+K})$.
    Consider the rest of the code in \textsc{Incremental-UN-exact}.
    The three outer for-loops account for $n\cdot 2^n$ executions of the inner code block,
    which take $O(1)$ time, resulting in $O(n \cdot 2^n)$ time in total.
    Thus, in total the time amounts $O(n \cdot 2^{n+K})$.

    For the space requirements, we observe that the tables $T_{\un}$ and $S$ are of size at most $2^n$ storing numbers of $n$ bits.
    Moreover, the recursion of \textsc{Compute-count-UN} can be at most $n$ deep, so only $n$ unions
    of neighborhoods have to be stored, which are at most of size $n \cdot 2^K$.
    Since $O(n \cdot 2^K) \subseteq O(n \cdot 2^{n/2}) \subsetneq O(n \cdot 2^n)$, the total space requirements amount to $O(n \cdot 2^n)$.
\end{proof}

\subsection{Proof of Lemma~\ref{lemma:trivialcases}}
\label{prf:trivialcases}

\begin{claim}
Let $X \subseteq Left$. If $\exists v \in Right$ such that $N(v) \cap Right = N(X) \cap Right$, 
then choosing $v$ will not change the boolean-width of the resulting decomposition.
\end{claim}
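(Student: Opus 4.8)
The plan is to read the statement as an optimality-preserving (``safe'') reduction: I want to exhibit an optimal linear ordering extending the current prefix $Left$ that places $v$ immediately after $Left$, so that committing to $v$ next cannot increase the best attainable boolean-width. The whole argument rests on one monotonicity fact, which I would isolate first.

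\textbf{Key step.} For every $A$ with $Left \subseteq A \subseteq V \setminus \set{v}$ I claim $|\un(A \cup \set{v})| \leq |\un(A)|$. First I would note that the hypothesis upgrades to $N(v) \cap \overline{A} = N(X) \cap \overline{A}$: since $\overline{A} \subseteq \overline{Left} = Right$, intersecting the equality $N(v) \cap Right = N(X) \cap Right$ with $\overline{A}$ yields exactly this. Writing $B = A \cup \set{v}$, so that $\overline{B} = \overline{A} \setminus \set{v}$, I would define the projection $\phi : \un(A) \to \un(B)$, $W \mapsto W \setminus \set{v}$, which is well defined because $(N(T) \cap \overline{A}) \setminus \set{v} = N(T) \cap \overline{B}$ for any $T \subseteq A$. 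The crux is surjectivity: given $S \subseteq B$, if $v \notin S$ then $N(S) \cap \overline{B} = \phi(N(S) \cap \overline{A})$; if $v \in S$, setting $S' = S \setminus \set{v} \subseteq A$ and using $N(v) \cap \overline{A} = N(X) \cap \overline{A}$ I would rewrite $N(S) \cap \overline{B} = ((N(S') \cup N(v)) \cap \overline{A}) \setminus \set{v} = (N(S' \cup X) \cap \overline{A}) \setminus \set{v} = \phi(N(S' \cup X) \cap \overline{A})$, where $S' \cup X \subseteq A$ because $X \subseteq Left \subseteq A$. Hence $\phi$ is onto and $|\un(B)| \leq |\un(A)|$.

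\textbf{Exchange argument.} Next I would take any optimal linear ordering extending $Left$ and let $v$ occupy position $j$ among the vertices of $Right$; writing $A_i = Left \cup \set{u_1, \dots, u_i}$ for the cuts it induces, where $u_1, u_2, \dots$ is the suffix and $u_j = v$, I would move $v$ to the front of the suffix. This produces new cuts $A'_i = A_{i-1} \cup \set{v}$ for $i = 1, \dots, j-1$ (with $A_0 = Left$), while the cut at position $j$ is unchanged since $A_{j-1} \cup \set{v} = A_j$ as a set, and every cut inside $Left$ or after position $j$ is untouched. By the key step $|\un(A'_i)| \leq |\un(A_{i-1})|$, so the maximum boolean dimension over the affected cuts $A'_1, \dots, A'_{j-1}$ is at most the maximum over $A_0, \dots, A_{j-2}$, all of which (including $A_0 = Left$, a cut common to both orderings) already occur in the original ordering. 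Thus no induced cut value increases, the modified ordering remains optimal, and it selects $v$ right after $Left$, which is exactly the assertion.

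\textbf{Main obstacle.} The only nonroutine part is the surjectivity of $\phi$, that is, certifying that adjoining $v$ creates no genuinely new neighborhood across the cut. This is precisely where the hypothesis is used: the contribution $N(v)$ is replaced by $N(X)$, which is already realizable by a subset of $A$ since $X \subseteq Left$. Everything surrounding it---the upgrade of the hypothesis to arbitrary $\overline{A} \subseteq Right$ and the positional bookkeeping of the swap---is the standard exchange argument once this monotonicity is in hand.
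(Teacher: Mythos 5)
Your proof is correct, and it rests on the same core observation as the paper's: since $N(v)\cap Right=N(X)\cap Right$ with $X\subseteq Left$, moving $v$ to the left side creates no new union of neighborhoods, because any union involving $N(v)$ can be reproduced by substituting $X$. Where the two diverge is in how much of the claim is actually argued. The paper states the observation only for the single cut at $Left$ (asserting $\un(Left)=\un(Left\cup\{v\})$) and then dismisses everything downstream in one sentence, saying the remaining vertices ``interact in the exact same way''; it implicitly reads the statement as invariance of the heuristic's output and offers no formal treatment of the later cuts. You instead (i) prove the monotonicity $|\un(A\cup\{v\})|\le|\un(A)|$ for \emph{every} $A$ with $Left\subseteq A\subseteq V\setminus\{v\}$, via the surjection $W\mapsto W\setminus\{v\}$ and the substitution $N(S)\cap\overline{B}=\phi(N(S'\cup X)\cap\overline{A})$ --- exactly the step the paper needs but never isolates --- and (ii) convert it into an explicit exchange argument showing that some optimal completion of the prefix takes $v$ next. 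This pins down a precise reading of ``will not change the boolean-width of the resulting decomposition'' as optimality preservation, which is the property that actually licenses the greedy shortcut; it is a different (and arguably stronger) formalization than the paper's informal one, but your key step supports both. The trade-off is only length: the paper's two sentences convey the idea, while your version is the one that would survive scrutiny.
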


\begin{proof}
The choice for $v$ will not change the unions of neighborhoods in any way, which means that $\un(Left) = \un(Left \cup \{v\})$.
Thus, for any vertex in $Right \setminus \{v\}$ it will hold that it will interact in the exact same with with $\un(Left)$ as it would with $\un(Left \cup \{v\})$, resulting in the boolean dimension of the computed ordering being the same.
\end{proof}

\subsection{Proof of Lemma~\ref{order-preserving}}
\label{prf:order-preserving}

\begin{claim}
    The mapping $\frac{a}{b} \mapsto \frac{a}{a + b}$ is order preserving.
\end{claim}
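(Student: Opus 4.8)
The plan is to reduce both the original comparison and the transformed comparison to one and the same algebraic inequality, so that they become logically equivalent and the order is therefore preserved. Throughout I would treat $a$ and $b$ as non-negative reals (in the application they are the cardinalities $|External(v)|$ and $|Internal(v)|$), with the denominators understood to be positive wherever a quotient is actually written.

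First I would fix two inputs, with values $\frac{a_1}{b_1}$ and $\frac{a_2}{b_2}$, and recall that since the denominators are positive, cross-multiplication is order preserving: $\frac{a_1}{b_1} \le \frac{a_2}{b_2}$ holds if and only if $a_1 b_2 \le a_2 b_1$. Next I would apply the same cross-multiplication to the images, obtaining that $\frac{a_1}{a_1 + b_1} \le \frac{a_2}{a_2 + b_2}$ holds if and only if $a_1(a_2 + b_2) \le a_2(a_1 + b_1)$. Expanding gives $a_1 a_2 + a_1 b_2 \le a_1 a_2 + a_2 b_1$, and cancelling the common $a_1 a_2$ term leaves exactly $a_1 b_2 \le a_2 b_1$. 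Hence both comparisons are equivalent to the identical condition, and the same chain of equivalences holds verbatim with $\le$ replaced by $<$ or by $=$, which is precisely what order preservation requires.

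An equally short alternative would be to observe that $\frac{a}{a+b}$ depends only on the ratio $t = \frac{a}{b}$, since $\frac{a}{a+b} = \frac{t}{t+1} = f(t)$, and that $f(t) = 1 - \frac{1}{t+1}$ is strictly increasing on $[0,\infty)$ because $f'(t) = (t+1)^{-2} > 0$; composing with a strictly increasing function preserves order. I would keep this only as a remark, since the cross-multiplication argument is self-contained and avoids any appeal to calculus.

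The one point requiring care --- and the reason the lemma is worth stating --- is the degenerate denominator. When $b = 0$ the original quotient $\frac{a}{b}$ is undefined, which is exactly the edge case that the transformed score function is designed to sidestep; here $\frac{a}{a+b}$ still makes sense (it equals $1$ when $a > 0$) and sits at the top of the order, consistent with $\frac{a}{b} \to +\infty$. So the main obstacle is not the algebra, which is routine, but stating the domain conventions cleanly enough that ``same ordering'' is meaningful precisely on the inputs that arise in the heuristic.
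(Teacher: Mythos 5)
Your cross-multiplication argument is correct and is essentially the same computation as the paper's proof, which subtracts $\frac{a}{a+b} - \frac{c}{c+d}$ and reduces it to the sign of $ad - bc$; your version just phrases the same cancellation as a chain of equivalences, and your remarks on the monotone reparametrization $t \mapsto \frac{t}{t+1}$ and the $b = 0$ edge case are sound additions but not a different method. No gap.
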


\begin{proof}
    Suppose $\frac{a}{b} \leq \frac{c}{d}$. Then $ad - bc \leq 0$.
    Now we see that
    \[
    \frac{a}{a+b} - \frac{c}{c+d}
    = \frac{a(c+d) - c(a+b)}{(c+d)(a+b)}
    = \frac{ac + ad - ac - bc}{(c+d)(a+b)}
    = \frac{ad - bc}{(c+d)(a+b)}
    \leq 0
    \]
    Thus $ \frac{a}{a+b} \leq \frac{c}{c+d}$.
\end{proof}

\subsection{Proof of Theorem~\ref{theorem:incrementalheuristic}}
\label{prf:incrementalheuristic}

\begin{claim}
    The \textsc{Incremental-UN-heuristic} procedure runs in $O(n^3 \cdot 2^k)$ time using $O(n \cdot 2^k)$ space, 
	where $k$ is the boolean-width of the resulting linear decomposition.
\end{claim}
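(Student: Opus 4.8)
The plan is to bound the running time and space of the \textsc{Incremental-UN-Heuristic} by analyzing the main while-loop, which executes exactly $n-1$ times since each iteration moves one vertex from $Right$ to $Left$. The dominant cost within each iteration comes from the \textbf{else}-branch, where we call \textsc{Increment-UN} once for every candidate $v \in Candidates$. Since $|Candidates| \leq |Right| \leq n$, and each call to \textsc{Increment-UN} on the set $Left$ costs $O(n \cdot |\un_{Left}|)$ time by Lemma~\ref{increment-un-correctness}, a single iteration takes $O(n^2 \cdot |\un_{Left}|)$ time. Multiplying by the $n-1$ iterations gives a total of $O(n^3 \cdot \max_{Left} |\un_{Left}|)$.

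The key observation linking this to the parameter $k$ is that $|\un_{Left}| = 2^{\booldim(Left)}$ by definition of boolean dimension, and every cut $(Left, Right)$ arising during the algorithm corresponds to a node in the resulting linear decomposition $(T, \delta)$. Hence $\booldim(Left) \leq \bw(T, \delta) = k$ for every such cut, giving $|\un_{Left}| \leq 2^k$. Substituting this bound yields the claimed $O(n^3 \cdot 2^k)$ time. I would note that the \textbf{if}-branch (trivial case) performs only a single \textsc{Increment-UN} call and is therefore cheaper, so it does not affect the asymptotic bound; the same holds for the candidate set strategy, whose computation is polynomial and dominated by the exponential term.

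For the space requirement, I would argue that at any moment the algorithm stores only a constant number of $\un$-sets simultaneously: the current $\un_{Left}$, the best $\un_{chosen}$ found so far, and the $\un_v$ under consideration. Each such set contains at most $2^k$ neighborhoods, each representable as a bitset of size $O(n)$, yielding $O(n \cdot 2^k)$ space. The $Decomposition$ ordering and the sets $Left, Right$ only contribute $O(n^2)$ space, which is absorbed.

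The main subtlety I would need to address carefully is whether $2^k$ is a legitimate \emph{uniform} bound across all iterations. Since $k$ is defined as the boolean-width of the \emph{resulting} decomposition — the maximum of $\booldim$ over all its cuts — and each intermediate cut $(Left, Right)$ is precisely one of these cuts, the bound $|\un_{Left}| \leq 2^k$ holds at every step, making the analysis output-sensitive in exactly the way the surrounding text advertises. I do not expect a genuine obstacle here, only the need to state explicitly that the cost is governed by the largest $\un$-set encountered, which by construction never exceeds $2^k$.
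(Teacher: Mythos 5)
Your proposal is correct and follows essentially the same route as the paper's proof: bound the while-loop by $n$ iterations, each making at most $n$ calls to \textsc{Increment-UN} at cost $O(n\cdot|\un_{Left}|)$ per call, and observe that every prefix cut is a cut of the resulting decomposition so $|\un_{Left}|\leq 2^k$, with a constant number of $\un$-sets held at once for the space bound. Your explicit justification that $2^k$ is a uniform bound across iterations (and that the per-call cost is governed by the input set $\un_{Left}$ rather than by the candidate cuts) is slightly more careful than the paper's one-line version of the same point.
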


\begin{proof}
The time is determined by the number of sets saved in $\un_{Left}$.
The worst case consisting of $Candidates = Right$ will result in at most $n$ iterations and calls to \Call{Increment-UN}{}.
This call takes $O(n \cdot 2^{|\un_{Left}|})$ time by Lemma~\ref{increment-un-correctness}.
By definition $|\un_{Left}|$ never exceeds $2^k$, where $k$ is the boolean-width of the resulting decomposition. Because we need to make $n$ greedy choices to process the
entire graph, we conclude that the total time for this algorithm is $O(n^3 \cdot 2^k)$
For the space requirements we observe that all structures in the algorithm require $O(n)$ space, except for the unions of neighborhoods.
Since there are only stored two of them at any time and they require at most $O(n \cdot 2^k)$ space, the total space requirements amount to $O(n \cdot 2^k)$.
\end{proof}

\subsection{Proof of Lemma~\ref{lemma:necntc}}
\label{prf:necntc}
\begin{claim}
$\nec{A} \leq {ntc(A)}^{d \cdot k}$.
\end{claim}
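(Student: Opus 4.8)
I want to bound the number of $d$-neighborhood
equivalence classes $\nec{A}$ in terms of the twin class partition size
$ntc(A)$ and $k = \booldim(A)$. The plan is to show that any $d$-neighborhood
vector $N_A^d(X)$, which a priori is an element of $\{0,1,\dots,d\}^{|\overline{A}|}$,
is actually determined by only $d \cdot k$ coordinate-like choices, each ranging
over $ntc(A)$ possible values. First I would recall the two facts already
available: that $X \equiv_A^d Y$ iff $N_A^d(X) = N_A^d(Y)$, so counting classes
is the same as counting distinct $d$-neighborhood vectors; and that
$|\un(A)| = nec(\equiv_A^1) = 2^k$, which controls how the ``simple''
($d=1$) neighborhoods across the cut can differ.

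**The main idea.** The key observation is that $N_A^d(X)$ can be reconstructed
as a sum of $1$-neighborhoods of a nested chain. Concretely, for a fixed $X$,
write the membership count $\min(d, |N(v)\cap X|)$ at each $v \in \overline{A}$
as a superposition of at most $d$ indicator contributions: I would partition
$X$ (or rather choose a sequence of subsets) so that the $d$-capped count is
recovered from $d$ many $1$-neighborhood sets $N(X_1)\cap\overline{A}, \dots,
N(X_d)\cap\overline{A}$. The point is that for each threshold level
$t \in \{1,\dots,d\}$, the set of vertices in $\overline{A}$ that receive at
least $t$ contributions is itself a union of neighborhoods across the cut,
hence one of at most $2^k$ possibilities. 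That alone would give $2^{dk}$, so to
sharpen the base from $2^k$ to $ntc(A)$ I would instead work at the level of
twin classes: since vertices in the same twin class of $A$ have identical
neighborhoods across the cut, the relevant $1$-neighborhood sets are determined
by how many representatives from each of the $ntc(A)$ twin classes are included,
and the capped count structure lets each of the $d\cdot k$ degrees of freedom
range over $ntc(A)$ values rather than $2^k$.

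**Carrying it out.** The steps in order: (1) reduce to counting distinct
vectors $N_A^d(X)$ using $X\equiv_A^d Y \iff N_A^d(X)=N_A^d(Y)$; (2) show each
such vector is determined by a tuple of $d\cdot k$ selections, by combining the
threshold-decomposition (the factor $d$) with the fact that the $1$-neighborhood
layer has at most $2^k$ distinct sets but is pinned down by twin-class
multiplicities (the factor $k$ controlling how many twin classes can matter,
the base $ntc(A)$ counting the choices per slot); (3) multiply to get
$ntc(A)^{d\cdot k}$. I expect the main obstacle to be step (2): making the
decomposition of the capped count $\min(d,|N(v)\cap X|)$ into exactly $d\cdot k$
independent slots rigorous, and justifying that the correct base is $ntc(A)$
rather than a cruder $2^k$ or $n$. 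I would handle this by expressing
$N_A^d(X)$ via a chain of $d$ nested $1$-neighborhoods and then arguing that,
because the distinct $1$-neighborhoods number at most $2^k$ and are built from
$ntc(A)$ twin classes, the number of consistent assignments at each level is
bounded by $ntc(A)^{k}$, so the product over $d$ levels yields the claimed
$ntc(A)^{d\cdot k}$.
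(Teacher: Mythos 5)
Your overall counting template---``$d\cdot k$ selections, each ranging over $ntc(A)$ twin classes''---has the right shape, but the mechanism you propose to justify it does not work, and the step that would actually carry the proof is never established. Concretely, your key observation that for each threshold $t\in\{1,\dots,d\}$ the level set $S_t=\set{v\in\overline{A}}{|N(v)\cap X|\ge t}$ is a union of neighborhoods across the cut is false for $t\ge 2$. Take $A=\{a,b\}$, $\overline{A}=\{u,v,w\}$ with edges $au,av,bv,bw$ and $X=\{a,b\}$: then $S_2=\{v\}$, but $\un(A)=\bigl\{\emptyset,\{u,v\},\{v,w\},\{u,v,w\}\bigr\}$, so $S_2\notin\un(A)$. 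Hence the threshold decomposition does not confine each ``level'' to one of $2^k$ (let alone $ntc(A)^k$) possibilities, and the same problem blocks the variant where you try to realize $\min(d,|N(v)\cap X|)$ as a sum over a nested chain of $d$ one-neighborhoods. What your argument actually needs, and what you only assert as ``degrees of freedom,'' is a small-representative lemma: every class of $\equiv_A^d$ contains a set $Y$ with $|Y|\le d\cdot k$. Given that, the count is immediate, since the class of $Y$ depends only on the multiplicities (capped at $d$) of the twin classes of $A$ hit by $Y$, giving at most $ntc(A)^{d\cdot k}$ classes. But that representative bound is the entire content of the lemma (it is essentially Lemma~5 of Bui-Xuan et al.~\cite{fastdynamicprogramming}), and your proposal gives no proof of it.

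For comparison, the paper's own proof is a two-line reduction through maximum induced matching width: it cites $mim(A)\le\booldim(A)$ and Vatshelle's bound $\nec{A}\le ntc(A)^{d\cdot mim(A)}$, then substitutes $k$ for $mim(A)$. That route is strictly stronger at its intermediate step (the exponent $d\cdot mim(A)$ can be much smaller than $d\cdot k$) and avoids the combinatorics you are struggling with. If you want a self-contained direct proof, the piece to supply is the representative lemma above, not the threshold decomposition.
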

\begin{proof}
We make use of a graph parameter called \emph{maximum induced matching-width}~\cite{Belmonte201354}.
Let $mim(A)$ denote the maximum matching-width of $A$.
It has been shown that for a graph $G$ and for any subset $A \subseteq V$ it holds that $mim(A) \leq \booldim(A)$~\cite[Theorem 4.2.10]{vatshelle}.
From~\cite[Lemma 5.2.3]{vatshelle} we know that $nec(\equiv_A^d) \leq {ntc(A)}^{d \cdot mim(A)}$, thus $nec(\equiv_A^d) \leq {ntc(A)}^{d \cdot k}$.
\end{proof}

\newpage
\section{Figures and Tables}

\subsection{Figures}
\begin{figure}[!htb]
    \centering
    \includegraphics[width=\textwidth]{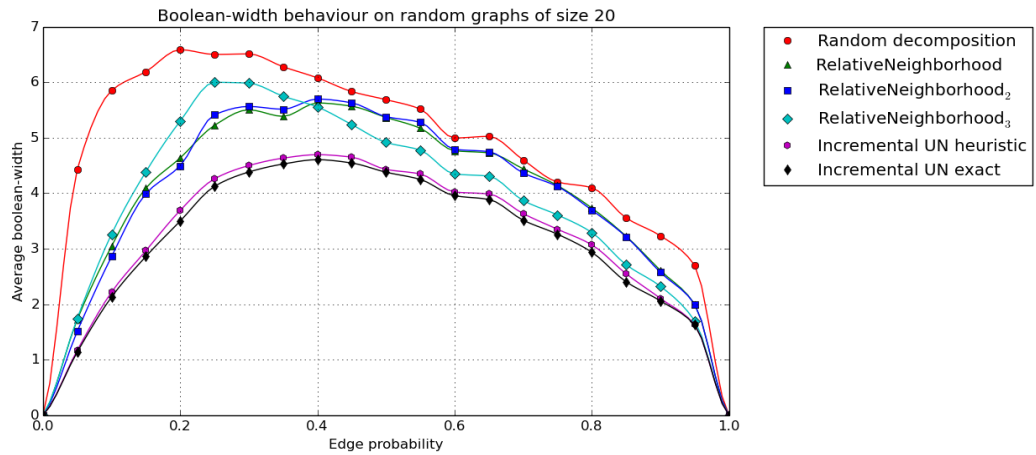}
    \caption{Performance of different heuristics on random generated graphs consisting of 20 vertices, with varying edge probabilities, in terms of linear boolean-width.}
\label{fig:random_smallappendixappendix}
\end{figure}

\begin{figure}[!htb]
    \centering
    \includegraphics[width=\textwidth]{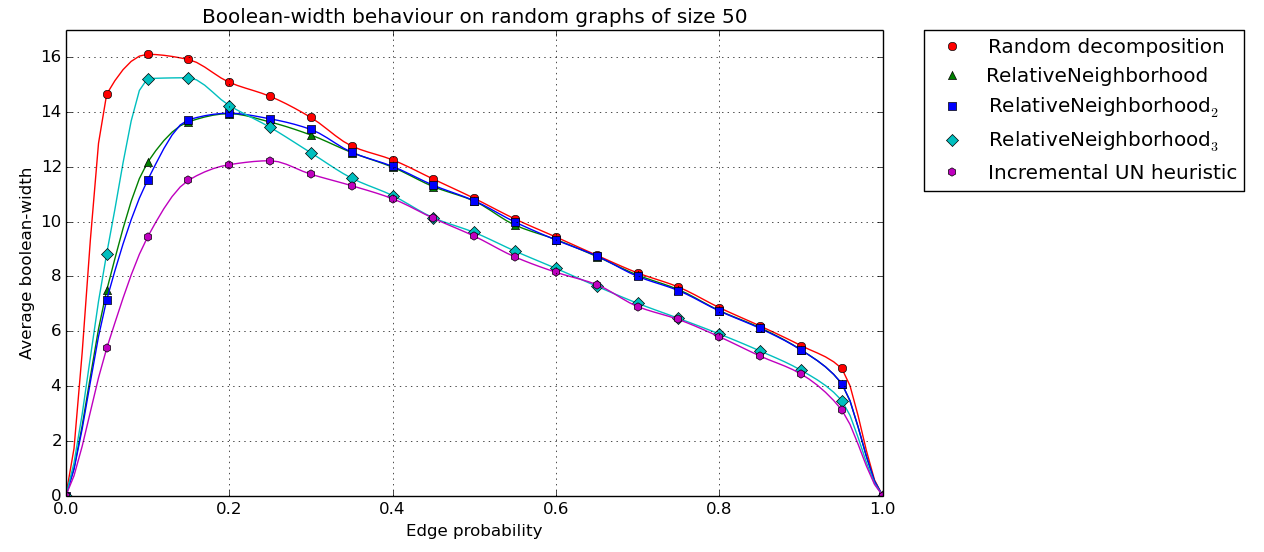}
    \caption{
	Performance of different heuristics on random generated graphs consisting of 50 vertices, with varying edge probabilities.
	Because of feasibility limitations, the \textsc{Incremental-UN-exact} algorithm is only used for the in Figure~\ref{fig:random_large}.
	While the optimal values are now unknown, it is clear that \textsc{Incremental-UN-heuristic} outperforms all other heuristics.
	Interestingly enough, \textsc{RelativeNeighborhood\textsubscript{3}} peers with \textsc{Incremental-UN-heuristic} as soon as the edge probability exceeds 0.4.
	Moreover, \textsc{RelativeNeighborhood} and \textsc{RelativeNeighborhood\textsubscript{2}} do not perform better than a random decomposition generator after
	the edge probability exceeds 0.4.
	We also observe that the highest boolean-width values are reached when the edge probability is around 0.1--0.2,
	indicating that the size of the graphs has an influence on the edge-probability-boolean-width-curve.
	Also note that it seems that dense random graphs have lower linear boolean-width than sparse graphs.
	Therefore it may be profitable to use \textsc{RelativeNeighborhood\textsubscript{3}} when dense graphs are encountered.}
\label{fig:random_large}
\end{figure}

\subsection{Tables}
\label{appendix:tables}

\begin{center}
\begin{longtable}{|c|cc|ccccc|}
\caption{Linear boolean-width of the decompositions returned by the heuristics described in Section~\ref{section:heuristics}, with $Candidates = Right$.
For 2-IUN we use two start vertices; one is obtained through a single BFS search, while the other is obtained through a double BFS search.
The n-IUN heuristic uses all $n$ start vertices, and all other heuristics use start vertices obtained through performing a double BFS.}
\label{table:fullmediumheuristiclbw}
\\
\hline
Graph & $|V|$ & Edge Density & Relative & LeastCut & IUN & 2-IUN & n-IUN  \\
\hline
\endfirsthead
\multicolumn{8}{c}%
{\tablename\ \thetable\ -- \textit{Continued from previous page}} \\
\hline
Graph & $|V|$ & Edge Density & Relative & LeastCut & IUN & 2-IUN & n-IUN  \\
\hline
\endhead
\hline \multicolumn{8}{r}{\textit{Continued on next page}} \\
\endfoot
\hline
\endlastfoot
alarm & 37 & 0.10 & 3.32 & 3.00 & 3.00 & 3.00 & 3.00  \\
barley & 48 & 0.11 & 5.70 & 5.91 & 5.91 & 4.70 & 4.58  \\
pigs-pp & 48 & 0.12 & 10.35 & 7.13 & 7.13 & 7.13 & 6.64  \\
BN\_100 & 58 & 0.17 & 15.84 & 11.56 & 11.56 & 10.86 & 10.86  \\
eil76 & 76 & 0.08 & 8.86 & 8.33 & 8.33 & 8.33 & 8.33  \\
david & 87 & 0.11 & 9.38 & 6.27 & 6.27 & 6.27 & 5.86  \\
1jhg & 101 & 0.17 & 12.86 & 8.67 & 8.67 & 8.49 & 8.41  \\
1aac & 104 & 0.25 & 20.29 & 12.40 & 12.40 & 12.40 & 12.33  \\
celar04-pp & 114 & 0.08 & 11.67 & 7.27 & 7.27 & 7.27 & 7.27  \\
1a62 & 122 & 0.21 & 18.92 & 11.68 & 11.68 & 11.28 & 11.14  \\
1bkb-pp & 127 & 0.18 & 16.81 & 9.98 & 9.98 & 9.53 & 9.53  \\
1dd3 & 128 & 0.17 & 16.61 & 9.98 & 9.98 & 9.90 & 9.90  \\
miles1500 & 128 & 0.64 & 8.17 & 5.58 & 5.58 & 5.58 & 5.29  \\
miles250 & 128 & 0.05 & 7.95 & 7.13 & 7.13 & 5.39 & 4.58  \\
celar10-pp & 133 & 0.07 & 10.32 & 11.95 & 11.95 & 7.64 & 6.91  \\
anna & 138 & 0.05 & 12.65 & 8.67 & 8.67 & 8.51 & 7.94  \\
pr152 & 152 & 0.04 & 12.69 & 11.19 & 11.19 & 10.36 & 8.29  \\
munin2-pp & 167 & 0.03 & 15.17 & 9.61 & 9.61 & 9.61 & 7.61  \\
mulsol.i.5 & 186 & 0.23 & 7.55 & 5.29 & 5.29 & 5.29 & 3.58  \\
zeroin.i.2 & 211 & 0.16 & 7.92 & 4.46 & 4.46 & 4.46 & 3.81  \\
boblo & 221 & 0.01 & 19.00 & 4.32 & 4.32 & 4.32 & 4.00  \\
fpsol2.i-pp & 233 & 0.40 & 5.58 & 6.07 & 6.07 & 5.78 & 4.81  \\
munin4-wpp & 271 & 0.02 & 13.04 & 9.27 & 9.27 & 9.27 & 7.61  \\
\end{longtable}
\end{center}

\begin{center}
\begin{longtable}{|c|cc|ccccc|}
\caption{Time in seconds of the heuristics used to find the linear boolean decompositions of which the boolean-width is displayed in Table~\ref{table:fullmediumheuristiclbw}.}\\
\hline
Graph & $|V|$ & Edge Density & Relative & LeastCut & IUN & 2-IUN & n-IUN  \\
\hline
\endfirsthead
\multicolumn{8}{c}%
{\tablename\ \thetable\ -- \textit{Continued from previous page}} \\
\hline
Graph & $|V|$ & Edge Density & Relative & LeastCut & IUN & 2-IUN & n-IUN  \\
\hline
\endhead
\hline \multicolumn{8}{r}{\textit{Continued on next page}} \\
\endfoot
\hline
\endlastfoot
alarm & 37 & 0.10 & $<0.01$ & 0.02 & $<0.01$ & $<0.01$ & 0.06  \\
barley & 48 & 0.11 & $<0.01$ & 0.18 & 0.01 & 0.02 & 0.16  \\
pigs-pp & 48 & 0.12 & $<0.01$ & 0.76 & 0.02 & 0.04 & 0.52  \\
BN\_100 & 58 & 0.17 & $<0.01$ & 25.10 & 0.41 & 1.24 & 17.17  \\
eil76 & 76 & 0.08 & 0.02 & 5.00 & 0.13 & 0.29 & 8.35  \\
david & 87 & 0.11 & 0.02 & 3.15 & 0.04 & 0.06 & 1.62  \\
1jhg & 101 & 0.17 & 0.03 & 24.46 & 0.21 & 0.48 & 14.75  \\
1aac & 104 & 0.25 & 0.04 & 754.54 & 5.66 & 11.81 & 375.31  \\
celar04-pp & 114 & 0.08 & 0.04 & 5.73 & 0.14 & 0.23 & 9.85  \\
1a62 & 122 & 0.21 & 0.06 & 585.95 & 3.10 & 11.57 & 376.26  \\
1bkb-pp & 127 & 0.18 & 0.06 & 198.05 & 1.14 & 4.18 & 107.32  \\
1dd3 & 128 & 0.17 & 0.07 & 117.21 & 0.92 & 2.74 & 91.19  \\
miles1500 & 128 & 0.64 & 0.06 & 44.57 & 0.10 & 0.14 & 7.05  \\
miles250 & 128 & 0.05 & 0.02 & 0.56 & 0.05 & 0.10 & 1.24  \\
celar10-pp & 133 & 0.07 & 0.06 & 8.93 & 1.96 & 4.72 & 18.43  \\
anna & 138 & 0.05 & 0.06 & 20.81 & 0.22 & 0.57 & 19.95  \\
pr152 & 152 & 0.04 & 0.10 & 50.74 & 1.76 & 5.66 & 120.06  \\
munin2-pp & 167 & 0.03 & 0.11 & 3.81 & 0.80 & 3.37 & 30.21  \\
mulsol.i.5 & 186 & 0.23 & 0.09 & 37.88 & 0.13 & 0.27 & 8.80  \\
zeroin.i.2 & 211 & 0.16 & 0.06 & 18.70 & 0.09 & 0.11 & 5.85  \\
boblo & 221 & 0.01 & 0.29 & 3.39 & 0.28 & 0.56 & 46.22  \\
fpsol2.i-pp & 233 & 0.40 & 0.18 & 189.11 & 0.36 & 0.74 & 56.63  \\
munin4-wpp & 271 & 0.02 & 0.61 & 57.87 & 1.98 & 6.66 & 367.37  \\
\end{longtable}
\end{center}

\begin{center}
\begin{longtable}{|c|cc|ccc|cc|}
\caption{Results of using the algorithm by Bui-Xuan et al.~\cite{fastdynamicprogramming} 
for solving $(\sigma, \rho)$ problems on graphs, 
using decompositions obtained using the IUN heuristic using all starting vertices.
The columns $UB$ indicate theoretical upperbounds on the number of equivalence classes, with
$UB_1 = 2^{d \cdot \bw^2}$, $UB_2 = {(d+1)}^{\min ntc}$ and $UB_3 = ntc^{d \cdot \bw}$, 
with $ntc = \max\limits_{w \in T} ntc(\delta(w))$ and $\min ntc = \max\limits_{w \in T} \min(ntc(\delta(w)),ntc(\overline{\delta(w)}))$.}\\
\hline
Graph & $\bw$ & $\log_2(nec)$ & $\log_2(UB_1)$ & $\log_2(UB_2)$  & $\log_2(UB_3)$  & $MIM$ & Time (s) \\
\hline
\endfirsthead
\multicolumn{8}{c}%
{\tablename\ \thetable\ -- \textit{Continued from previous page}} \\
\hline
Graph & $\bw$ & $\log_2(nec)$ & $\log_2(UB_1)$ & $\log_2(UB_2)$  & $\log_2(UB_3)$  & $MIM$ & Time (s) \\
\hline
\endhead
\hline \multicolumn{8}{r}{\textit{Continued on next page}} \\
\endfoot
\hline
\endlastfoot
alarm & 3.00 & 4.32 & 18.00 & 7.92 & 13.93 & 18 & $<1$  \\
barley & 4.58 & 7.00 & 42.04 & 12.68 & 27.51 & 22 & 3  \\
pigs-pp & 6.64 & 10.31 & 88.28 & 19.02 & 49.17 & 22 & 1147  \\
BN\_100 & 10.86 & - & 235.93 & 36.45 & 105.53 & - & -  \\
eil76 & 8.33 & 12.63 & 138.81 & 22.19 & 65.10 & - & -  \\
david & 5.86 & 9.37 & 68.63 & 22.19 & 44.61 & 34 & 919  \\
1jhg & 8.41 & 13.53 & 141.58 & 41.21 & 81.75 & - & -  \\
1aac & 12.33 & - & 304.08 & 72.91 & 141.25 & - & -  \\
celar04-pp & 7.27 & 11.15 & 105.61 & 28.53 & 65.74 & - & -  \\
1a62 & 11.14 & - & 248.09 & 60.23 & 121.61 & - & -  \\
1bkb-pp & 9.53 & - & 181.47 & 52.30 & 98.49 & - & -  \\
1dd3 & 9.90 & - & 196.11 & 52.30 & 103.17 & - & -  \\
miles1500 & 5.29 & 9.30 & 55.87 & 34.87 & 49.69 & 8 & 4038  \\
miles250 & 4.58 & 7.24 & 42.04 & 15.85 & 31.72 & 52 & 37  \\
celar10-pp & 6.91 & 10.34 & 95.41 & 25.36 & 59.70 & 50 & 10179  \\
anna & 7.94 & 11.94 & 125.98 & 33.28 & 75.48 & - & -  \\
pr152 & 8.29 & 12.76 & 137.45 & 22.19 & 63.13 & - & -  \\
munin2-pp & 7.61 & 11.82 & 115.97 & 19.02 & 54.60 & - & -  \\
mulsol.i.5 & 3.58 & 6.11 & 25.70 & 14.26 & 24.80 & 46 & 22  \\
zeroin.i.2 & 3.81 & 6.58 & 28.99 & 20.60 & 28.18 & 30 & 59  \\
boblo & 4.00 & 6.17 & 32.00 & 9.51 & 20.68 & 148 & 41  \\
fpsol2.i-pp & 4.81 & 8.07 & 46.22 & 22.19 & 36.61 & 46 & 934  \\
munin4-wpp & 7.61 & 12.13 & 115.97 & 19.02 & 57.98 & - & -  \\
\end{longtable}
\end{center}

\begin{table}[H]
\centering
\caption{\small{Width of linear boolean decompositions found with the IUN heuristic using the start vertices returned by performing a double BFS, and with $candidates = N^2(Left) \cap Right$ in order to decrease the computation time.
The values of the two others heuristics are taken from~\cite{Practicalaspects}.
Missing entries are caused by a lack of internal memory which is caused by the $O(n \cdot 2^k)$ space requirement, with $k$ being the linear boolean-width of the computed decomposition. 
The last column indicates the time of the IUN heuristic.}}
\begin{tabular}{|c|cc|ccc|c|}
\hline
Graph & $|V|$ & Edge Density & LeastUncommon & Relative & IUN & Time (s)  \\
\hline
link-pp & 308 & 0.02 & 34.81 & 28.68 & 17.44 & 610.09  \\
diabetes-wpp & 332 & 0.01 & 8.58 & 18.58 & 5.32 & 1.53  \\
link-wpp & 339 & 0.02 & 35.00 & 29.03 & 16.79 & 374.04  \\
celar10 & 340 & 0.02 & 20.81 & 15.00 & 10.17 & 1.83  \\
celar11 & 340 & 0.02 & 19.54 & 14.70 & 10.80 & 1.88  \\
rd400 & 400 & 0.01 & 34.73 & 21.32 & 17.01 & 1,007.03  \\
diabetes & 413 & 0.01 & 29.32 & 19.32 & - & -  \\
fpsol2.i.3 & 425 & 0.10 & 15.87 & 8.92 & 7.67 & 2.11  \\
pigs & 441 & 0.01 & 24.04 & 18.00 & 12.39 & 20.08  \\
celar08 & 458 & 0.02 & 24.95 & 15.00 & 10.17 & 2.12  \\
d493 & 493 & 0.01 & 20.29 & 48.10 & 16.73 & 708.57  \\
homer & 561 & 0.01 & 36.22 & 28.49 & - & -  \\
rat575 & 575 & 0.01 & 16.48 & 37.23 & - & -  \\
u724 & 724 & 0.01 & 18.72 & 50.09 & - & -  \\
inithx.i.1 & 864 & 0.05 & 11.98 & 7.22 & 6.81 & 7.31  \\
munin2 & 1003 & $<0.01$ & 31.25 & 12.13 & 11.91 & 61.17  \\
vm1084 & 1084 & $<0.01$ & 15.21 & 48.95 & - & -  \\
BN\_24 & 1819 & $<0.01$ & 4.91 & 2.32 & 2.58 & 610.72  \\
BN\_25 & 1819 & $<0.01$ & 4.64 & 2.32 & 2.58 & 601.41  \\
BN\_23 & 2425 & $<0.01$ & 8.48 & 3.17 & 2.58 & 1,808.29  \\
BN\_26 & 3025 & $<0.01$ & 6.98 & 2.32 & 3.58 & 4,532.83  \\
\hline
\end{tabular}
\label{table:table}
\end{table}

\begin{center}
\begin{longtable}{|c|cc|ccc|c|}
\caption{Linear boolean-width upperbounds that are obtained through using the IUN heuristic with all starting vertices and $candidates = Right$.
The $tw$ column gives an upperbound on the treewidth, while the $bw$ column gives an upperbound on the boolean-width, which values are taken from~\cite{Practicalaspects}.
Cursive graph names marked with an asterisk indicate the graphs for which, in theory, the linear boolean decomposition 
will give a higher bound on the running time than the boolean decomposition, i.e., graphs for which $2^{2 lbw} > 2^{3 bw}$.
From this it seems that linear boolean-width seem to be more useful in practice than
boolean-width heuristics.
However, one should note that on certain graph classes, for instance graphs which look like
trees, boolean-width is a lot lower than linear boolean-width.
}\\
\hline
Graph & $|V|$ & Edge Density & $tw$ & $bw$ & $lbw$ & $lbw/bw$  \\
\hline
\endfirsthead
\multicolumn{7}{c}%
{\tablename\ \thetable\ -- \textit{Continued from previous page}} \\
\hline
Graph & $|V|$ & Edge Density & $tw$ & $bw$ & $lbw$ & $lbw/bw$  \\
\hline
\endhead
\hline \multicolumn{7}{r}{\textit{Continued on next page}} \\
\endfoot
\hline
\endlastfoot
celar06-pp-003 & 4 & 0.5 & 2 & 1 & 1 & 1.00   \\
\emph{diabetes-pp-001*} & 6 & 0.8 & 4 & 1 & 1.58 & 1.58   \\
\emph{munin3-pp-001*} & 7 & 0.81 & 5 & 1 & 1.58 & 1.58   \\
\emph{munin3-pp-002*} & 7 & 0.81 & 5 & 1 & 1.58 & 1.58   \\
celar06-pp-000 & 8 & 0.43 & 3 & 1 & 1 & 1.00   \\
diabetes-pp-002 & 8 & 0.61 & 4 & 2.32 & 2.32 & 1.00   \\
mainuk-pp & 9 & 0.78 & 6 & 1.58 & 1.58 & 1.00   \\
rl5934-pp-001 & 10 & 0.44 & 4 & 2.81 & 3.17 & 1.13   \\
fl3795-pp-001 & 10 & 0.44 & 4 & 2.81 & 3 & 1.07   \\
fl3795-pp-003 & 10 & 0.44 & 4 & 2.81 & 3 & 1.07   \\
fl3795-pp-002 & 10 & 0.44 & 4 & 2.81 & 3.17 & 1.13   \\
pathfinder-pp-001 & 11 & 0.58 & 5 & 2.58 & 3.32 & 1.29   \\
myciel3 & 11 & 0.36 & 5 & 3 & 3.46 & 1.15   \\
pcb3038-pp-001 & 11 & 0.4 & 5 & 3 & 2.81 & 0.94   \\
fl3795-pp-004 & 11 & 0.42 & 4 & 3 & 3.46 & 1.15   \\
pathfinder-pp & 12 & 0.65 & 6 & 2.58 & 2.81 & 1.09   \\
celar11-pp-002 & 13 & 0.59 & 7 & 2.81 & 3.17 & 1.13   \\
celar04-pp-001-000 & 15 & 0.74 & 9 & 1.58 & 2 & 1.27   \\
weeduk & 15 & 0.47 & 7 & 1.58 & 1.58 & 1.00   \\
fungiuk & 15 & 0.34 & 4 & 2 & 1.58 & 0.79   \\
pcb3038-pp-002 & 15 & 0.3 & 5 & 3 & 2.81 & 0.94   \\
mildew-wpp & 15 & 0.3 & 4 & 2.58 & 3.32 & 1.29   \\
celar04-pp-001 & 16 & 0.78 & 10 & 1.58 & 2 & 1.27   \\
celar06-pp & 16 & 0.84 & 11 & 1.58 & 1.58 & 1.00   \\
celar10-pp-001 & 16 & 0.51 & 8 & 3 & 3.46 & 1.15   \\
celar09-pp-001 & 16 & 0.51 & 8 & 3 & 3.17 & 1.06   \\
celar08-pp-002 & 16 & 0.51 & 8 & 3 & 3.32 & 1.11   \\
celar07-pp-002 & 16 & 0.45 & 7 & 3 & 3.32 & 1.11   \\
barley-pp-001 & 16 & 0.42 & 7 & 3.32 & 3.32 & 1.00   \\
celar11-pp-004 & 16 & 0.36 & 6 & 3.17 & 3.58 & 1.13   \\
munin2-pp-005 & 16 & 0.3 & 5 & 3 & 3.58 & 1.19   \\
munin2-pp-006 & 16 & 0.3 & 5 & 3 & 3.58 & 1.19   \\
munin2-pp-003 & 16 & 0.3 & 5 & 3.17 & 3.7 & 1.17   \\
munin2-pp-004 & 16 & 0.3 & 5 & 3.17 & 3.7 & 1.17   \\
munin2-pp-007 & 17 & 0.35 & 7 & 3.46 & 3.58 & 1.03   \\
munin2-pp-011 & 17 & 0.35 & 7 & 3.46 & 3.58 & 1.03   \\
munin2-pp-010 & 17 & 0.35 & 7 & 3.46 & 3.81 & 1.10   \\
munin2-pp-008 & 17 & 0.35 & 7 & 3.46 & 3.58 & 1.03   \\
munin2-pp-009 & 18 & 0.31 & 6 & 3.46 & 3.81 & 1.10   \\
munin2-pp-012 & 18 & 0.31 & 6 & 3.46 & 3.81 & 1.10   \\
celar01-pp-002 & 19 & 0.65 & 10 & 2 & 2.32 & 1.16   \\
celar02-pp & 19 & 0.67 & 10 & 2 & 2 & 1.00   \\
celar05-pp-001 & 19 & 0.66 & 11 & 2 & 2.32 & 1.16   \\
celar11-pp-001 & 19 & 0.65 & 10 & 2 & 2.32 & 1.16   \\
fl3795-pp-005 & 19 & 0.22 & 4 & 3.32 & 3.58 & 1.08   \\
water-pp-001 & 21 & 0.45 & 9 & 3.81 & 4.09 & 1.07   \\
anna-pp & 22 & 0.64 & 12 & 3.46 & 3.81 & 1.10   \\
water-pp & 22 & 0.42 & 9 & 4.17 & 4.32 & 1.04   \\
water-wpp & 22 & 0.42 & 9 & 4.17 & 4.32 & 1.04   \\
munin4-pp-001 & 23 & 0.26 & 8 & 3.58 & 4 & 1.12   \\
munin4-pp-002 & 23 & 0.26 & 8 & 3.58 & 4 & 1.12   \\
myciel4 & 23 & 0.28 & 10 & 5 & 5.49 & 1.10   \\
BN\_29 & 24 & 0.18 & 5 & 2 & 2.32 & 1.16   \\
BN\_28 & 24 & 0.18 & 5 & 2 & 2.32 & 1.16   \\
queen5\_5 & 25 & 0.53 & 18 & 5.29 & 5.67 & 1.07   \\
barley-pp & 26 & 0.24 & 7 & 3.7 & 3.46 & 0.94   \\
fl3795-pp-006 & 26 & 0.16 & 5 & 3.81 & 4.17 & 1.09   \\
david-pp & 29 & 0.47 & 13 & 4.09 & 4.32 & 1.06   \\
barley-wpp & 29 & 0.2 & 7 & 3.81 & 3.58 & 0.94   \\
pcb3038-pp-003 & 29 & 0.12 & 5 & 4.32 & 4.75 & 1.10   \\
celar02-wpp & 30 & 0.33 & 10 & 2.81 & 2.58 & 0.92   \\
water & 32 & 0.25 & 9 & 4.39 & 4.75 & 1.08   \\
BN\_16-pp-015 & 34 & 0.28 & 11 & 3.58 & 4.39 & 1.23   \\
celar06-wpp & 34 & 0.28 & 11 & 3 & 3.17 & 1.06   \\
BN\_16-pp-014 & 34 & 0.28 & 11 & 3.81 & 4.86 & 1.28   \\
1bx7-pp & 34 & 0.31 & 11 & 4.7 & 4.39 & 0.93   \\
mildew & 35 & 0.13 & 4 & 3 & 3.32 & 1.11   \\
queen6\_6 & 36 & 0.46 & 25 & 7.65 & 8.08 & 1.06   \\
alarm & 37 & 0.1 & 4 & 2.58 & 3 & 1.16   \\
celar03-pp-001 & 38 & 0.34 & 14 & 5.81 & 6.11 & 1.05   \\
\emph{munin4-pp-003*} & 38 & 0.16 & 8 & 3.58 & 5.39 & 1.51   \\
munin4-pp-004 & 38 & 0.16 & 8 & 4.17 & 5.39 & 1.29   \\
celar08-pp-001 & 39 & 0.38 & 16 & 5.09 & 5.21 & 1.02   \\
oesoca & 39 & 0.09 & 3 & 2.32 & 3 & 1.29   \\
1bx7 & 41 & 0.24 & 11 & 4.91 & 4.75 & 0.97   \\
oesoca42 & 42 & 0.08 & 3 & 2.32 & 3.17 & 1.37   \\
celar07-pp-001 & 45 & 0.32 & 16 & 5.46 & 5.86 & 1.07   \\
celar01-pp-001 & 47 & 0.25 & 15 & 5.88 & 6.36 & 1.08   \\
celar05-pp-002 & 47 & 0.25 & 15 & 6.07 & 5.83 & 0.96   \\
myciel5 & 47 & 0.22 & 19 & 8.12 & 6.49 & 0.80   \\
1ubq-pp & 47 & 0.16 & 12 & 5.95 & 8.79 & 1.48   \\
pigs-pp-001 & 47 & 0.12 & 9 & 5.95 & 7.07 & 1.19   \\
1brf-pp & 48 & 0.36 & 22 & 7.01 & 7.25 & 1.03   \\
1rb9 & 48 & 0.37 & 22 & 6.77 & 7.17 & 1.06   \\
celar11-pp-003 & 48 & 0.23 & 15 & 5.73 & 4.58 & 0.80   \\
\emph{mainuk*} & 48 & 0.18 & 7 & 3.58 & 6.49 & 1.81   \\
barley & 48 & 0.11 & 7 & 4 & 3.7 & 0.93   \\
pigs-pp & 48 & 0.12 & 9 & 5.7 & 6.64 & 1.16   \\
1brf & 49 & 0.35 & 22 & 7.01 & 7.3 & 1.04   \\
queen7\_7 & 49 & 0.4 & 35 & 10.36 & 10.97 & 1.06   \\
1kth-pp & 51 & 0.33 & 20 & 7.06 & 5.86 & 0.83   \\
1i07-pp & 51 & 0.28 & 15 & 5.55 & 7.18 & 1.29   \\
eil51.tsp & 51 & 0.11 & 9 & 5.78 & 5.78 & 1.00   \\
1igq-pp & 52 & 0.37 & 23 & 6.74 & 7.45 & 1.11   \\
1kth & 52 & 0.32 & 20 & 7.04 & 6.87 & 0.98   \\
1g6x & 52 & 0.31 & 19 & 6.89 & 7.21 & 1.05   \\
1igq & 54 & 0.35 & 23 & 6.89 & 7.61 & 1.10   \\
zeroin.i.1-pp & 54 & 0.89 & 46 & 1.58 & 1.58 & 1.00   \\
1e0b-pp & 55 & 0.33 & 24 & 7.69 & 8.32 & 1.08   \\
munin4-pp-006 & 55 & 0.11 & 8 & 4.32 & 5.17 & 1.20   \\
munin4-pp-005 & 55 & 0.11 & 8 & 4.39 & 5.17 & 1.18   \\
1j75 & 56 & 0.36 & 27 & 8.51 & 8.94 & 1.05   \\
1k61-pp & 56 & 0.37 & 26 & 8.02 & 8.37 & 1.04   \\
1sem-pp & 56 & 0.37 & 26 & 8.09 & 8.5 & 1.05   \\
1bbz-pp & 56 & 0.35 & 25 & 8.18 & 8.36 & 1.02   \\
1bf4-pp & 57 & 0.39 & 26 & 7.63 & 7.79 & 1.02   \\
1cka & 57 & 0.38 & 27 & 8.55 & 8.87 & 1.04   \\
1sem & 57 & 0.36 & 26 & 8.32 & 8.66 & 1.04   \\
zeroin.i.2-pp & 57 & 0.69 & 32 & 2.81 & 3.32 & 1.18   \\
zeroin.i.3-pp & 57 & 0.69 & 32 & 3 & 3.32 & 1.11   \\
1bbz & 57 & 0.34 & 25 & 8.3 & 8.36 & 1.01   \\
1oai-pp & 57 & 0.32 & 22 & 7.94 & 8.28 & 1.04   \\
1jo8 & 58 & 0.37 & 27 & 8.46 & 8.73 & 1.03   \\
1oai & 58 & 0.32 & 22 & 7.87 & 8.15 & 1.04   \\
celar01-pp-003 & 58 & 0.19 & 15 & 6.97 & 6.89 & 0.99   \\
1g2b-pp & 59 & 0.37 & 28 & 8.5 & 8.99 & 1.06   \\
1igd-pp & 59 & 0.36 & 25 & 7.66 & 7.9 & 1.03   \\
1kq1-pp & 59 & 0.35 & 27 & 8.63 & 8.94 & 1.04   \\
1pwt-pp & 59 & 0.38 & 29 & 8.85 & 9.24 & 1.04   \\
1i07 & 59 & 0.23 & 15 & 5.52 & 5.93 & 1.07   \\
1k61 & 60 & 0.33 & 26 & 8.32 & 8.81 & 1.06   \\
1kq1 & 60 & 0.34 & 27 & 8.79 & 8.89 & 1.01   \\
1ku3-pp & 60 & 0.33 & 23 & 7.46 & 7.53 & 1.01   \\
1e0b & 60 & 0.29 & 24 & 8.13 & 8.42 & 1.04   \\
knights8\_8-pp & 60 & 0.09 & 16 & 10.77 & 11.3 & 1.05   \\
1gut-pp & 61 & 0.33 & 22 & 7.19 & 7.54 & 1.05   \\
1i2t & 61 & 0.35 & 27 & 8.38 & 9.03 & 1.08   \\
1igd & 61 & 0.34 & 25 & 7.75 & 7.9 & 1.02   \\
1pwt & 61 & 0.36 & 29 & 8.81 & 9.27 & 1.05   \\
1ku3 & 61 & 0.32 & 23 & 7.53 & 7.61 & 1.01   \\
1g2b & 62 & 0.34 & 28 & 8.72 & 9.05 & 1.04   \\
1fr3-pp & 62 & 0.32 & 21 & 7.16 & 7.29 & 1.02   \\
celar04-pp-002 & 62 & 0.17 & 16 & 6.86 & 7.26 & 1.06   \\
1bf4 & 63 & 0.34 & 26 & 7.9 & 8.09 & 1.02   \\
1r69 & 63 & 0.35 & 30 & 9.12 & 9.51 & 1.04   \\
munin1-pp-001 & 63 & 0.09 & 11 & 5.58 & 6.43 & 1.15   \\
1gcq-pp & 64 & 0.36 & 30 & 8.95 & 9.38 & 1.05   \\
queen8\_8 & 64 & 0.36 & 45 & 13.16 & 14.05 & 1.07   \\
1a8o & 64 & 0.27 & 25 & 9.11 & 9.3 & 1.02   \\
knights8\_8 & 64 & 0.08 & 16 & 11.06 & 11.64 & 1.05   \\
1fjl & 65 & 0.29 & 26 & 7.9 & 8.49 & 1.07   \\
1c9o & 66 & 0.34 & 29 & 8.75 & 8.88 & 1.01   \\
1hg7 & 66 & 0.33 & 29 & 8.81 & 9.13 & 1.04   \\
1ezg & 66 & 0.25 & 23 & 8.33 & 7 & 0.84   \\
1en2-pp & 66 & 0.21 & 17 & 7.46 & 8.54 & 1.14   \\
munin1-pp & 66 & 0.09 & 11 & 5.58 & 6.43 & 1.15   \\
1c4q & 67 & 0.34 & 31 & 9.45 & 9.71 & 1.03   \\
1fse & 67 & 0.33 & 27 & 8.58 & 8.75 & 1.02   \\
1kw4 & 67 & 0.3 & 28 & 9.39 & 5.73 & 0.61   \\
1gut & 67 & 0.28 & 22 & 7.47 & 7.36 & 0.99   \\
1fr3 & 67 & 0.28 & 21 & 7.29 & 7.47 & 1.02   \\
1b67-pp & 67 & 0.25 & 16 & 6.61 & 9.61 & 1.45   \\
1gcq & 68 & 0.33 & 30 & 9.36 & 9.65 & 1.03   \\
1ail-pp & 68 & 0.28 & 24 & 8.11 & 8.33 & 1.03   \\
1d3b-pp & 68 & 0.3 & 25 & 8.54 & 5.78 & 0.68   \\
1b67 & 68 & 0.25 & 16 & 6.61 & 8.52 & 1.29   \\
1c75 & 69 & 0.29 & 30 & 9.88 & 8.31 & 0.84   \\
1ail & 69 & 0.27 & 24 & 8.07 & 9.68 & 1.20   \\
1d3b & 69 & 0.29 & 25 & 8.44 & 8.53 & 1.01   \\
1en2 & 69 & 0.2 & 17 & 7.24 & 7 & 0.97   \\
1cc8 & 70 & 0.34 & 32 & 9.35 & 9.63 & 1.03   \\
1dj7-pp & 70 & 0.3 & 27 & 8.12 & 8.22 & 1.01   \\
1i27-pp & 70 & 0.3 & 27 & 8.67 & 8.82 & 1.02   \\
1l9l & 70 & 0.29 & 29 & 9.26 & 10 & 1.08   \\
1ljo-pp & 71 & 0.31 & 30 & 8.92 & 9.02 & 1.01   \\
1dp7-pp & 71 & 0.3 & 27 & 9.21 & 9.15 & 0.99   \\
graph03-pp-001 & 71 & 0.11 & 20 & 12.53 & 12.24 & 0.98   \\
1mgq-pp & 72 & 0.31 & 28 & 8.98 & 9.08 & 1.01   \\
1i27 & 73 & 0.28 & 27 & 8.78 & 9.06 & 1.03   \\
mulsol.i.1-pp & 73 & 0.83 & 50 & 2.32 & 2.58 & 1.11   \\
1dj7 & 73 & 0.28 & 27 & 9.66 & 8.22 & 0.85   \\
1ldd & 74 & 0.31 & 32 & 9.6 & 9.73 & 1.01   \\
1ljo & 74 & 0.29 & 30 & 8.88 & 9.06 & 1.02   \\
1mgq & 74 & 0.3 & 28 & 8.91 & 9.06 & 1.02   \\
huck & 74 & 0.11 & 10 & 2.81 & 3.32 & 1.18   \\
1ubq & 74 & 0.08 & 12 & 6.61 & 7.75 & 1.17   \\
1ig5 & 75 & 0.29 & 33 & 10.45 & 10.64 & 1.02   \\
1dp7 & 76 & 0.27 & 27 & 9.01 & 9.3 & 1.03   \\
celar10-pp-002 & 76 & 0.15 & 16 & 7.25 & 6.58 & 0.91   \\
celar08-pp-003 & 76 & 0.15 & 16 & 7.41 & 6.58 & 0.89   \\
celar09-pp-002 & 76 & 0.15 & 16 & 7.46 & 6.58 & 0.88   \\
1iqz & 77 & 0.29 & 33 & 10 & 10.1 & 1.01   \\
1qtn-pp & 77 & 0.25 & 24 & 8.56 & 8.33 & 0.97   \\
\emph{munin3-pp-003*} & 79 & 0.09 & 7 & 4.17 & 12.73 & 3.05   \\
graph03-pp & 79 & 0.1 & 20 & 12.99 & 5.61 & 0.43   \\
sodoku-elim1 & 80 & 0.28 & 45 & 9.47 & 12 & 1.27   \\
\emph{jean*} & 80 & 0.08 & 9 & 3.91 & 6.54 & $1.67$   \\
celar05-pp & 80 & 0.13 & 15 & 7.2 & 4.58 & 0.64   \\
sodoku & 81 & 0.25 & 45 & 9 & 12.7 & 1.41   \\
celar03-pp & 81 & 0.13 & 14 & 6.19 & 6.11 & 0.99   \\
graph03-wpp & 84 & 0.09 & 20 & 12.74 & 12.92 & 1.01   \\
1fk5 & 85 & 0.23 & 31 & 10.76 & 10.1 & 0.94   \\
1aba & 85 & 0.25 & 29 & 10.13 & 10.81 & 1.07   \\
graph01-pp-001 & 85 & 0.09 & 24 & 13.4 & 13.66 & 1.02   \\
1ctj-pp & 86 & 0.25 & 33 & 10.78 & 11.07 & 1.03   \\
1ctj & 87 & 0.25 & 33 & 10.74 & 11.04 & 1.03   \\
1ptf & 87 & 0.3 & 38 & 11.21 & 10.86 & 0.97   \\
1qtn & 87 & 0.21 & 24 & 9.15 & 8.97 & 0.98   \\
david & 87 & 0.11 & 13 & 5.32 & 5.86 & 1.10   \\
graph05-pp-001 & 87 & 0.1 & 24 & 12.68 & 13.31 & 1.05   \\
1awd & 89 & 0.28 & 38 & 10.8 & 11.13 & 1.03   \\
celar03-wpp & 89 & 0.11 & 14 & 6.17 & 6.49 & 1.05   \\
celar05-wpp & 89 & 0.11 & 15 & 7.52 & 6.54 & 0.87   \\
graph01-pp & 89 & 0.08 & 24 & 14.62 & 13.96 & 0.95   \\
munin1-wpp & 90 & 0.05 & 11 & 7.23 & 7.58 & 1.05   \\
1jhg-pp & 91 & 0.19 & 25 & 8.34 & 8.41 & 1.01   \\
graph05-pp & 91 & 0.1 & 24 & 13.84 & 13.49 & 0.97   \\
celar07-pp & 92 & 0.12 & 16 & 6 & 6 & 1.00   \\
a280.tsp-pp & 92 & 0.06 & 14 & 8.23 & 7.38 & 0.90   \\
\emph{kroE100.tsp-pp*} & 92 & 0.06 & 10 & 6.48 & 14.84 & 2.29   \\
1g2r-pp & 93 & 0.26 & 37 & 11.87 & 11.51 & 0.97   \\
graph01-wpp & 93 & 0.07 & 24 & 14.69 & 11.41 & 0.78   \\
1czp & 94 & 0.27 & 38 & 11.47 & 11.6 & 1.01   \\
1g2r & 94 & 0.25 & 37 & 12.17 & 14.19 & 1.17   \\
graph05-wpp & 94 & 0.09 & 24 & 14.38 & 13.18 & 0.92   \\
1c5e & 95 & 0.26 & 36 & 11.06 & 10.83 & 0.98   \\
myciel6 & 95 & 0.17 & 35 & 13.4 & 7.86 & 0.59   \\
homer-pp & 95 & 0.17 & 31 & 14.61 & 13.88 & 0.95   \\
kroA100.tsp-pp & 95 & 0.06 & 10 & 7.61 & 6.58 & 0.86   \\
celar11-pp & 96 & 0.1 & 15 & 6.64 & 5.98 & 0.90   \\
munin3-pp & 96 & 0.07 & 7 & 4.32 & 5.86 & 1.36   \\
celar07-wpp & 97 & 0.01 & 16 & 6 & 7.17 & 1.20   \\
\emph{kroC100.tsp-pp*} & 97 & 0.06 & 10 & 6.94 & 11.97 & 1.72   \\
1plc & 98 & 0.25 & 35 & 11.28 & 11.1 & 0.98   \\
1lkk-pp & 99 & 0.24 & 34 & 11 & 10.84 & 0.99   \\
1d4t-pp & 99 & 0.23 & 35 & 11.88 & 6.58 & 0.55   \\
celar11-wpp & 99 & 0.1 & 15 & 7.17 & 4.91 & 0.68   \\
1i0v & 100 & 0.24 & 41 & 12.21 & 12.47 & 1.02   \\
celar02 & 100 & 0.06 & 10 & 3.32 & 4.91 & 1.48   \\
\emph{celar06*} & 100 & 0.07 & 11 & 3.81 & 14.85 & 3.90   \\
graph05 & 100 & 0.08 & 24 & 13.7 & 13.36 & 0.98   \\
graph01 & 100 & 0.07 & 24 & 14.61 & 14.21 & 0.97   \\
graph03 & 100 & 0.07 & 20 & 13.29 & 8.41 & 0.63   \\
1erv & 101 & 0.25 & 41 & 12.26 & 12.44 & 1.01   \\
1jhg & 101 & 0.17 & 25 & 8.87 & 11.97 & 1.35   \\
1iib-pp & 102 & 0.27 & 40 & 11.98 & 11.76 & 0.98   \\
1d4t & 102 & 0.22 & 35 & 12.87 & 10.31 & 0.80   \\
1iib & 103 & 0.26 & 40 & 12.62 & 11.79 & 0.93   \\
1b0n & 103 & 0.19 & 32 & 10.81 & 11.17 & 1.03   \\
1lkk & 103 & 0.22 & 34 & 11.89 & 13.56 & 1.14   \\
1aac & 104 & 0.25 & 41 & 12.29 & 12.33 & 1.00   \\
1bkf-pp & 105 & 0.23 & 36 & 11.1 & 11.4 & 1.03   \\
1bkf & 106 & 0.23 & 36 & 11.69 & 11.44 & 0.98   \\
1bkr & 107 & 0.24 & 44 & 14.4 & 13.75 & 0.95   \\
1rro & 107 & 0.23 & 43 & 15.36 & 3.58 & 0.23   \\
1f9m & 109 & 0.23 & 45 & 14.27 & 13.56 & 0.95   \\
\emph{pathfinder*} & 109 & 0.04 & 6 & 3.32 & 10.83 & 3.26   \\
celar04-pp & 110 & 0.09 & 16 & 7.29 & 7.27 & 1.00   \\
1fs1 & 114 & 0.21 & 34 & 13.79 & 7.36 & 0.53   \\
celar04-wpp & 116 & 0.07 & 16 & 7.95 & 11.1 & 1.40   \\
1gef-pp & 117 & 0.22 & 43 & 12.93 & 13.35 & 1.03   \\
1gef & 119 & 0.21 & 43 & 13.6 & 13.35 & 0.98   \\
mulsol.i.5-pp & 119 & 0.36 & 31 & 3 & 3 & 1.00   \\
1a62-pp & 120 & 0.21 & 37 & 14.7 & 11.14 & 0.76   \\
1a62 & 122 & 0.21 & 37 & 13.62 & 9.68 & 0.71   \\
1dd3-pp & 124 & 0.17 & 31 & 14.6 & 9.25 & 0.63   \\
ch130.tsp-pp & 125 & 0.05 & 12 & 8.67 & 9.53 & 1.10   \\
1bkb-pp & 127 & 0.18 & 30 & 15.55 & 9.9 & 0.64   \\
miles1500 & 128 & 0.64 & 77 & 4.86 & 5.29 & 1.09   \\
1dd3 & 128 & 0.17 & 31 & 11.68 & 4.58 & 0.39   \\
miles500 & 128 & 0.14 & 22 & 9.42 & 7.04 & 0.75   \\
\emph{miles250*} & 128 & 0.05 & 9 & 4.95 & 9.61 & 1.94   \\
1bkb & 131 & 0.17 & 30 & 14.53 & 6.91 & 0.48   \\
celar10-pp & 133 & 0.07 & 16 & 9.08 & 7.7 & 0.85   \\
anna & 138 & 0.04 & 12 & 6.67 & 7.25 & 1.09   \\
celar09-wpp & 142 & 0.06 & 16 & 8.49 & 7 & 0.82   \\
celar01-pp & 157 & 0.07 & 15 & 7.39 & 7 & 0.95   \\
celar01-wpp & 158 & 0.06 & 15 & 7.09 & 7.61 & 1.07   \\
munin2-pp & 167 & 0.03 & 7 & 5.49 & 6.91 & 1.26   \\
mulsol.i.3 & 184 & 0.23 & 32 & 4.95 & 3.58 & 0.72   \\
mulsol.i.4 & 185 & 0.23 & 32 & 4.81 & 3.58 & 0.74   \\
mulsol.i.5 & 186 & 0.23 & 31 & 4.95 & 3.58 & 0.72   \\
mulsol.i.2 & 188 & 0.22 & 32 & 4.81 & 3.58 & 0.74   \\
celar08-wpp & 190 & 0.05 & 16 & 9.64 & 11.48 & 1.19   \\
mulsol.i.1 & 197 & 0.2 & 50 & 4 & 4.17 & 1.04   \\
zeroin.i.3 & 206 & 0.17 & 32 & 5.39 & 3.81 & 0.71   \\
zeroin.i.1 & 211 & 0.19 & 50 & 3.7 & 3.32 & 0.90   \\
zeroin.i.2 & 211 & 0.16 & 32 & 5.39 & 3.81 & 0.71   \\
fpsol2.i.1-pp & 233 & 0.4 & 66 & 4.91 & 4.81 & 0.98   \\
\label{table:bigtable}
\end{longtable}
\end{center}

\end{document}